\def\BibTeX{{\rm B\kern-.05em{\sc i\kern-.025em b}\kern-.08emT\kern-.1667em\lower.7ex\hbox{E}\kern-.125emX}}
\newcommand{\kclique}{$k$-clique\xspace}
\newcommand{\kcliques}{$k$-cliques\xspace}
\newcommand{\ie}{i.e.,\xspace}
\newcommand{\eg}{e.g.,\xspace}
\DeclarePairedDelimiter\abs{\lvert}{\rvert}
\renewcommand{\epsilon}{\ensuremath\varepsilon}
\renewcommand{\phi}{\ensuremath{\varphi}}
\newcommand{\arboricity}[1][default7null]{%
	\ifthenelse{\equal{#1}{default7null}}%
	{\alpha}%
	{\alpha(#1)}%
}
\newcommand{\corenumber}[1][default7null]{%
	\ifthenelse{\equal{#1}{default7null}}%
	{\mathit{c}}%
	{\mathit{c}(#1)}}
\newcommand{\induce}[1]{[#1]}
\newcommand{\neighbors}[2][default7null]{%
	\ifthenelse{\equal{#1}{default7null}}%
	{N({#2})}%
	{N\induce{#1}({#2})}%
}
\newcommand{\community}[2][default7null]{
	\ifthenelse{\equal{#1}{default7null}}%
	{C({#2})}%
	{C_{#1}(#2)}%
}
\newcommand{\outneighbor}[2]{N_{#2}^{+}({#1})}
\newcommand{\inneighbor}[2]{N_{#2}^{-}({#1})}
\newcommand{\outneighbors}[2][default7null]{%
	\ifthenelse{\equal{#1}{default7null}}%
	{N^+({#2})}%
	{N\induce{#1}^+({#2})}%
}
\newcommand{\inneighbors}[2][default7null]{%
	\ifthenelse{\equal{#1}{default7null}}%
	{N^-({#2})}%
	{N\induce{#1}^-({#2})}%
}
\newcommand{\edgeset}[1][default7null]{%
	\ifthenelse{\equal{#1}{default7null}}%
	{E}%
	{E\induce{#1}}%
}
\newcommand{\diredgeset}[1][default7null]{%
	\ifthenelse{\equal{#1}{default7null}}%
	{\vec{E}}%
	{\vec{E}\induce{#1}}%
}
\newcommand{\nodeset}[1][default7null]{%
	\ifthenelse{\equal{#1}{default7null}}%
	{V}%
	{V\induce{#1}}%
}
\newcommand{\cardinality}[1]{%
	\abs{{#1}}%
}
\newcommand{\relevantout}[2]{%
	\mathbb{P}^+_{#2}\left({#1}\right)%
}
\newcommand{\relevantin}[2]{%
	\mathbb{P}^-_{#2}\left({#1}\right)%
}
\newcommand{\relevantpair}[2]{%
	\mathcal{R}^{P}_{#2}\left({#1}\right)%
}
\newcommand{\relevantedge}[2]{%
	\mathcal{R}^E_{#2}\left({#1}\right)%
}
\newcommand{\relevantedgeout}[2]{%
	\mathbb{E}^+_{#2}\left({#1}\right)%
}
\newcommand{\relevantedgein}[2]{
	\mathbb{E}^-_{#2}\left({#1}\right)%
}
\newcommand{\bigo}[1]{\mathcal{O}\left(#1\right)}
\definecolor{tcolor}{RGB}{255,255,255}
\definecolor{inversetcolor}{RGB}{0,0,0}
\definecolor{cdefault}{rgb}{0.45,0.45,0.45}
\definecolor{hsbcurrent}{RGB}{158, 50, 53}
\definecolor{hsbold}{RGB}{150, 150, 150}
\definecolor{hsbcandidate}{RGB}{20, 20, 20}
\definecolor{hsbecandidate}{RGB}{20, 20, 20}
\definecolor{highlight1}{RGB}{220, 100, 10}
\definecolor{highlight2}{RGB}{10, 10, 100}
\definecolor{highlightcolor}{RGB}{0, 0, 0}
\colorlet{ccurrent}[rgb]{hsbcurrent}
\colorlet{cold}[rgb]{hsbold}
\colorlet{ccandidate}[rgb]{hsbcandidate}
\colorlet{chigh1}[rgb]{highlight1}
\colorlet{chigh2}[rgb]{highlight2}
\tikzset{vertex/.style={circle,fill=cdefault,minimum size=20pt,inner sep=0pt,text=tcolor,font=\large}}
\tikzset{current/.style={circle,fill=ccurrent, minimum size=20pt,inner sep=0pt,text=tcolor,font=\large}}
\tikzset{old/.style={circle,fill=cold, minimum size=20pt,inner sep=0pt,text=tcolor,font=\large}}
\tikzset{candidate/.style={circle,fill=ccandidate, minimum size=20pt,inner sep=0pt,text=tcolor,font=\large}}
\tikzset{highLeft/.style={circle,fill=chigh1, minimum size=20pt,inner sep=0pt,text=tcolor}}
\tikzset{highRight/.style={circle,fill=chigh2, minimum size=20pt,inner sep=0pt,text=tcolor}}
\tikzset{edge/.style={cdefault,thick,dotted}}
\tikzset{ecurrent/.style={ccurrent,thick}}
\tikzset{eold/.style={cold,thick}}
\tikzset{ecandidate/.style={hsbecandidate,thick}}
\tikzset{eoldthk/.style={cold,very thick}}
\tikzset{ecurthk/.style={ccurrent,ultra thick}}
\definecolor{hsbcandidate1}{Hsb}{330, 0.65, 0.55}
\definecolor{hsbcandidate2}{Hsb}{330, 0.65, 0.65}
\definecolor{hsbcandidate3}{Hsb}{330, 0.65, 0.75}
\definecolor{hsbcandidate4}{Hsb}{330, 0.65, 0.85}
\definecolor{hsbcandidate5}{Hsb}{330, 0.65, 0.95}
\colorlet{ccandidate1}[rgb]{highlightcolor}
\colorlet{ccandidate2}[rgb]{highlightcolor}
\colorlet{ccandidate3}[rgb]{highlightcolor}
\colorlet{ccandidate4}[rgb]{highlightcolor}
\colorlet{ccandidate5}[rgb]{highlightcolor}
\newacronym{sct}{SCT}{Succint Clique Tree}
\newacronym{alg}{Alg}{The Algorithm}
\newacronym{dfs}{DFS}{Depth First Search}
\newacronym{bfs}{BFS}{Breadth First Search}
\newacronym{fpm}{FPM}{Frequent Pattern Mining}
\newacronym{dag}{DAG}{Directed Acyclic Graph}
\newtheorem{observation}{Observation}
\begin{document}

\fancyhead{}

\title{Parallel Algorithms for Finding Large Cliques in Sparse Graphs}


\author{Lukas Gianinazzi$^1$, Maciej Besta$^1$, Yannick Schaffner$^2$, Torsten Hoefler$^1$}
       \affiliation{\vspace{0.3em}$^1$Department of Computer Science, ETH Zurich;
       {$^2$}Department of Mathematics, ETH Zurich
}

\if 0

\author{Lukas Gianinazzi}
\affiliation{%
 \department{Department of Computer Science}
  \institution{ETH Zurich}
}

\author{Maciej Besta}
\affiliation{%
 \department{Department of Computer Science}
  \institution{ETH Zurich}
}


\author{Yannick Schaffner}
\affiliation{%
 \department{Department of Mathematics}
  \institution{ETH Zurich}
}

\author{Torsten Hoefler}
\affiliation{%
 \department{Department of Computer Science}
  \institution{ETH Zurich}
}

\fi


\begin{abstract}
We present a parallel $k$-clique listing algorithm with improved work bounds (for the same depth) in sparse graphs with low degeneracy or arboricity. 
We achieve this by introducing and analyzing a new pruning criterion for a backtracking search.
Our algorithm has better asymptotic performance, especially for larger cliques (when $k$ is not constant), where we avoid the straightforwardly exponential runtime growth with respect to the clique size. In particular, for cliques that are a constant factor smaller than the graph's degeneracy, the work improvement is an exponential factor in the clique size compared to previous results. Moreover, we present a low-depth approximation to the community degeneracy (which can be arbitrarily smaller than the degeneracy). This approximation enables a low depth clique listing algorithm whose runtime is parameterized by the community degeneracy. 
\end{abstract}

\keywords{parallel graph algorithms; clique listing; arboricity; degeneracy; }

\maketitle

\section{Introduction}
\normalem
Finding large cliques has many applications in the social sciences,
bioinformatics, computational chemistry, and others~\cite{tomita2011efficient,
rehman2012graph, lee2010survey, shao2012managing, tang2010graph,
aggarwal2010managing, jiang2013survey, cook2006mining, besta2021graphminesuite, besta2021sisa}. As the problem is
NP-hard and remains hard even when parameterized by the size of the clique
$k$~\cite{Downey2013}, it makes sense to consider special families of graphs
for which the problem is tractable. If the work of an algorithm
  can be written as $\bigo{f(P)N^c}$ (for some function $f$ of the \emph{parameters} $P$ and some
  polynomial $N^c$ of the input size $N$), the problem is
  \emph{fixed parameter tractable} (FPT) with respect to the parameters $P$~\cite{Downey1995}.
  The clique problem is FPT with respect to certain \emph{structural sparsity
  parameters} of graphs.

As a classic example family of sparse graphs, planar graphs do not contain
cliques with more than $4$ vertices, and all cliques in a planar graph can be
listed in linear time in the size of the graph~\cite{PAPADIMITRIOU1981131}.
Planar graphs are members of a more general class of sparse graphs for which
every subgraph has a low-degree vertex. This notion of sparsity is called
\emph{degeneracy}. A closely related family of sparse graphs is defined by
being decomposable into a small number of forests. The \emph{arboricity}
measures the number of such forests required for a given graph. Many real-world
graphs are sparse~\cite{besta2019demystifying, besta2015accelerating,
besta2019slim, besta2019practice, besta2017push, besta2018log} and have a low degeneracy and
arboricity~\cite{Danisch2018, Shi2020, besta2018survey, besta2020high,
DBLP:journals/kais/ShinEF18}.

In this work, we focus on improving the work and depth of FPT algorithms to list all $k$-cliques in these sparse graphs, improving both work and depth significantly as a function of the clique size $k$.
Our approach utilizes the notion of \emph{edge communities} - vertices with whom an edge forms a triangle. 
We further introduce a notion of \emph{relevant edges}, that effectively upper
bounds the community's size. This enables us to reject edges that
cannot be a part of a \kclique, pruning unnecessary recursive calls. We show that
this pruning criterion asymptotically reduces the work compared to previous
approaches.

\subsection{Preliminaries}

We consider a (directed or undirected) graph $G=(V, E)$ with $n$ vertices $V$ and $m$ edges $E$. We will generally assume the graph is connected and therefore $m=\Omega(n)$. To orient a graph \emph{by a total order}, direct its edges from the endpoint lower in the total order to the endpoint higher in the total order. A graph that has been oriented by a total order is acyclic by construction.
The subgraph of a graph $G$ induced by a vertex set $V'$ is $G[V']$.
The \emph{neighbors} of a vertex $u$ in the graph $G$ are the set of all vertices in $V$ that are connected to $u$ by edges in $E$. In a directed graph, the \emph{out-neighbors} are $\outneighbor{u}{G} = \{v\in V\ \mid\ (u,v)\in E\}$, and the \emph{in-neighbors} are $\inneighbor{v}{G} = \{u\in V\ \mid\ (u,v)\in E\}$. If the graph $G$ is clear from the context, we write $\neighbors{u}$, $\outneighbors{u}$, and $\inneighbors{u}$ for short.
	
In an undirected graph, the \emph{community} $C_G(u,v)=C_G(e)$ of an edge $e=\{u,v\}$ is the intersection of the neighbors of its endpoints. In a directed graph, the \emph{community} $C_G(u,v)=C_G(e)$ of an edge $e=(u,v)$ is the intersection of the \emph{out}-neighbors of $u$ and the \emph{in}-neighbors of $v$. If the graph $G$ is clear from the context, we write $C(u,v)$ and $\community{e}$ for short.

\paragraph{Sparse Graphs}

A graph $G=(V,E)$ is $s$-\emph{degenerate}, if in all induced subgraphs $H$ of
$G$, there is a vertex with degree at most $s$ \cite{Lick1970}. The
\emph{degeneracy} of a graph is the smallest $s$ such that the graph is
$s$-degenerate. Note that an $s$-degenerate graph can have an unbounded maximum
degree. For example, the star graph is $1$-degenerate but has a maximum degree
of $n-1$. An $s$-degenerate graph can be oriented such that every vertex has at
most $s$ out-neighbors (greedily remove a vertex with the smallest degree in
the remaining subgraph). The order in which this greedy procedure removes
vertices is a \emph{degeneracy order}.

A graph $G=(V, E)$ is $\sigma$-\emph{community degenerate}, if every
(non-edgeless) subgraph $G'$ has an edge $e$ with $\cardinality{ C_{G'}(e) }
\le \sigma$. The \emph{community degeneracy} of a graph is the smallest value
$\sigma$, such that the graph is $\sigma$-community degenerate
\cite{buchanan2014solving}. The community degeneracy is strictly smaller than
the degeneracy: $\sigma < s$ and there are families of graphs where the
community degeneracy is asymptotically smaller. For example, the
$d$-dimensional hypercube has degeneracy $s=d$, but community degeneracy
$\sigma=0$.
Moreover, the community degeneracy can be 1 for graphs where the degeneracy is
$\Theta(n)$ and there are $\Theta(n)$ triangles. This is the case for a
graph which has the complete bipartite graph as a subgraph, with $n/2$ vertices in each part,
and additionally one part of the graph induces a line graph on $n/2$
vertices.
Buchanan et al.~\cite{buchanan2014solving} also show empirically that the community
degeneracy is significantly smaller (27\%-80\%) than the degeneracy for some
real-world graphs (when the degeneracy is significantly larger than the clique
number).
	
The smallest number of forests into which a graph can be decomposed (meaning that every edge is in exactly one of the forests) is the \emph{arboricity} $\alpha$ of a graph. It is closely related to the degeneracy, in particular $\alpha \leq s < 2\alpha$~\cite{Nash-Williams1961}.

\paragraph{$k$-Cliques}

An induced subgraph of $G$ that is a complete graph with $k$-vertices is a $k$-clique. It follows from the definition that an $s$-degenerate graph does not contain any $(s+2)$-clique and a graph with arboricity $\alpha$ does not contain any $(2\alpha+1)$-clique (using $\alpha > s/2$). In this paper, we generally assume $k\geq 4$.

Deciding the size of a maximum clique is a classic NP-hard problem~\cite{GareyNP-complete} and so is deciding if the graph has a \kclique. The problem of finding a clique of size $k$ remains hard when parameterized by $k$: it is W[1]-hard~\cite{Downey2013}. However, the problem becomes FPT for graphs when parameterized by degeneracy or arboricity~\cite{Chiba1985}.

\paragraph{Model of Computation} 
Our execution model is a shared-memory parallel computer with concurrent reads and exclusive writes (CREW PRAM)~\cite{Reif:1993:SPA:562546}. 
We express the performance of our algorithms in the \emph{work/depth model}.
 The \emph{work} is the total number of elementary operations performed by all processors in any algorithm execution. The \emph{depth} is the length of a critical path. An algorithm with work $W$ and depth $D$ can be scheduled on a CREW PRAM with $p$ processors so that it takes $O(W/p + D)$ time steps~\cite{Blelloch:1996:PPA:227234.227246, Reif:1993:SPA:562546}.

\subsection{Related Work}

Chiba and Nishizeki~\citep{Chiba1985} presented a result on $k$-clique listing for low arboricity graphs. Their algorithm takes $O(m\alpha^{k-2})$ work.  
Danish et al.~\cite{Danisch2018} gave improved bounds in terms of the degeneracy. Their algorithm does $O(m (\frac{s}{2})^{k-2})$ work. Since $\alpha\leq s<2\alpha$, this is never worse than Chiba and Nishizeki's work bound, but can be faster by a term exponential in $k$ for graphs where the degeneracy is close to the arboricity. The algorithm has a depth of $O(n)$, stemming from how the graph is directed with a degeneracy order.
Recently, Shi et al.~\cite{Shi2020} gave a variant on the algorithm that uses a fast parallel approximation to the degeneracy order (in addition to providing some improvements in the data structure used to represent the graph during the recursive search). When $k$ is constant (which they assume), this approximation does not impact the runtime. However, the work increases by a factor exponential in $k$ because of this approximation.
These \kclique listing algorithms are not well suited to finding very large cliques, as their bounds become \emph{super-exponential} in $s$ (and $\alpha$) when the size of the cliques is $k=\Omega(s)$. This is far from optimal, as a graph with degeneracy $s$ has at most $(n-s+1)2^{s}$ cliques overall~\cite{DBLP:journals/gc/Wood07}.

Previous \kclique listing algorithms~\cite{Chiba1985,Danisch2018,Shi2020}
use a backtracking scheme combined with a clique-growing pattern, where a
subgraph is expanded into larger and larger cliques, until a \kclique is found
or determined to be unreachable. A common optimization~\cite{Danisch2018,Shi2020} is to \emph{orient} the input graph. In \kclique
counting, one usually orients a graph with a total order on the vertices
\cite{Danisch2018,Shi2020}. A popular choice of order is the degeneracy-order
as it assigns low out degrees and can be computed in linear time
\cite{Matula1983}.
Other works on $k$-cliques focus on reordering
heuristics~\cite{li2020ordering}, using GPUs~\cite{almasri2021k}.

Listing \emph{maximal} cliques is also a subject of numerous
works~\cite{bouchitte2002listing, cheng2011finding, modani2008large,
cheng2012fast, schmidt2009scalable}.
For example, Eppstein~\cite{Eppstein2010} presents a variant of Bron-Kerbosch's
algorithm~\cite{DBLP:journals/cacm/BronK73,DBLP:journals/tcs/TomitaTT06} to
find \emph{maximal} cliques. It takes $\bigo{sn 3^{s/3}}$ time, which is close
to the $(n-s)3^{s/3}$ lower bound~\cite{Eppstein2010} on the number of maximal
cliques in an $s$-degenerate graph.
A \emph{maximum} clique can be computed in $\bigo{2^{0.276n}}$ time from the relationship to maximum independent sets~\cite{DBLP:journals/jal/Robson86}.

Many other variants of clique listing exist, such as diversified top-$k$ cliques~\cite{yuan2016diversified}, the $k$-clique densest subgraph problem~\cite{DBLP:conf/www/Tsourakakis15a, mitzenmacher2015scalable}, and densest subgraph discovery~\cite{galbrun2016top}.

Converting a graph to a directed, acyclic graph gives an effective way to
assign cliques to certain vertices: A clique can \eg be associated with its
topologically smallest or largest vertex. This way, double counting can be
avoided and the directions reduce the size of the search space.

\subsection{Our Contributions}

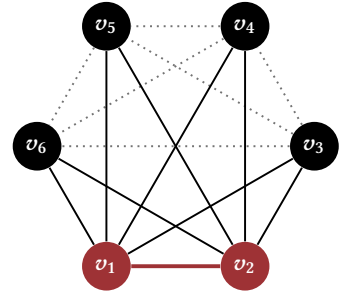
\begin{figure}
	\centering
	\resizebox{0.52\linewidth}{!}{%
			\boldmath
		\begin{tikzpicture}
			\node[current] (v1) at ({2*cos(5*60-60)},{2*sin(5*60-60)}) {$v_1$};
			\node[current] (v2) at ({2*cos(0*60-60)},{2*sin(0*60-60)}) {$v_2$};
			\node[current, fill=ccandidate1] (v3) at ({2*cos(1*60-60)},{2*sin(1*60-60)}) {$v_3$};
			\node[current, fill=ccandidate2] (v4) at ({2*cos(2*60-60)},{2*sin(2*60-60)}) {$v_4$};
			\node[current, fill=ccandidate3] (v5) at (({2*cos(3*60-60)},{2*sin(3*60-60)}) {$v_5$};
			\node[current, fill=ccandidate4] (v6) at ({2*cos(4*60-60)},{2*sin(4*60-60)}) {$v_6$};

			\foreach \from/\to in {
				v3/v4, v3/v5, v3/v6,
				v4/v5, v4/v6,
				v5/v6}
			\draw[edge] (\from) -- (\to);
			
			\draw[ecurthk] (v1) -- (v2);
			
			\draw[ecurrent,ccandidate1] (v1) -- (v3);
			\draw[ecurrent,ccandidate1] (v2) -- (v3);
			
			\draw[ecurrent,ccandidate2] (v1) -- (v4);
			\draw[ecurrent,ccandidate2] (v2) -- (v4);
			
			\draw[ecurrent,ccandidate3] (v1) -- (v5);
			\draw[ecurrent,ccandidate3] (v2) -- (v5);
			
			\draw[ecurrent,ccandidate4] (v1) -- (v6);
			\draw[ecurrent,ccandidate4] (v2) -- (v6);
		\end{tikzpicture}
	}
	\caption{To support a $k$-clique, an edge must have $k-2$ triangles that contain it. In other words, its community must have a size of $k-2$. In the example, the community of the edge $\{v_1, v_2\}$ contains all the other vertices $\{v_3, v_4, v_5, v_6\}$. Hence, it could potentially support a $6$-clique. Indeed, the edge $\{v_1, v_2\}$ does support a $6$-clique. } \label{fig:clique-triangles-example}
\vspace{-1em}
\end{figure}

We present an algorithm that provides, for the same depth, improved work bounds
for $k$-clique counting in $s$-degenerate graphs. In particular, when
  $k=\Omega(s)$, the improvement grows exponentially with $k$. 
%
%
Considering the case where $k$ is not constant is important because that is when the problem becomes hard.
See
\Cref{tab:bounds} for a comparison of our bounds compared to previous
$k$-clique listing results in sparse graphs. 
Our algorithm differs in two relevant aspects from previous approaches~\cite{Chiba1985,Danisch2018,Shi2020}.

First, existing approaches use the degree of a vertex to decide whether the neighborhood of a vertex \emph{can} contain a smaller clique. In contrast, our algorithm looks at edges and their \emph{triangles} to grow the clique: Each edge of a \kclique participates in (at least) $k-2$ triangles. See \Cref{fig:clique-triangles-example} for an example. 
Our algorithm preprocesses the graph such that each triangle $(a,b,c)$ is \emph{supported} by exactly one of its edges. If edge $(a,c)$ supports triangle $(a,b,c)$, we say that vertex $b$ is in the \emph{community} of edge $(a,c)$.
To find a \kclique, we only need to consider edges that support at least $k-2$ triangles. We add this edge to the clique and recursively search for a $(k-2)$-clique in the subgraph induced by the community of that edge. 

Second, we observe that we can use the number of vertices ordered between the endpoints of an edge as a proxy for the number of triangles it supports, as illustrated in \Cref{fig:clique-triangles-example-direction}. This observation allows us to exclude edges from the search space because they cannot support a \kclique, without actually looking at their triangles in every step. Moreover, this simplification enables us to upper bound our algorithm's work, improving on the state-of-the-art for $k$-clique counting in $s$-degenerate graphs.

Pruning edges by excluding those which do not have enough vertices ordered between their endpoints is what is responsible for the $\Theta\left(\left(\frac{1}{1-k/s}\right)^k\right)$ factor improvement in the work. 

We also discuss different ways to orient and preprocess the graph. These graph orientations leads to three variants with different work/depth tradeoffs, as seen in \Cref{tab:bounds}.

Finally, we show clique counting results parameterized by the community degeneracy. Our algorithm uses a total order \emph{on the edges} in addition to a total order on the vertices to get better results for graphs where $\sigma < s-1$. We again get three variants with different work/depth tradeoffs. To achieve the sublinear-depth variants, we present a novel low-depth approximation algorithm for the community degeneracy.

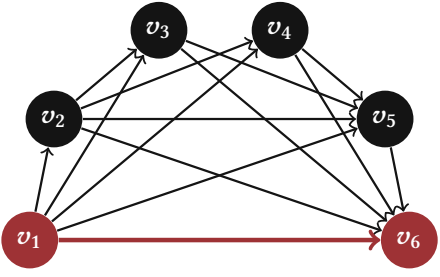
\begin{figure}
	\centering
	\resizebox{0.7\linewidth}{!}{%
	
		\resizebox{0.55\textwidth}{!}{%
			\boldmath
			\begin{tikzpicture}[->]
				
				\node[current] (v1) at(0.4, 0) {$v_1$};
				\node[candidate] (v2) at (0.7,1.5) {$v_2$};
				\node[candidate] (v3) at (2,2.6) {$v_3$};
				\node[candidate] (v4) at (3.5,2.6) {$v_4$};
				\node[candidate] (v5) at (4.8,1.5) {$v_5$};
				\node[current] (v6) at (5.1,0) {$v_6$};
				
				\foreach \from/\to in {
					v1/v2, v1/v3, v1/v4, v1/v5,
					v2/v3, v2/v4, v2/v5, v2/v6,
					v3/v5, v3/v6,
					v4/v5, v4/v6,
					v5/v6}
				\draw[ecandidate] (\from) -- (\to);
				
				
				\foreach \from/\to in {v1/v6}
				\draw[ecurthk] (\from) -- (\to);
			\end{tikzpicture}
		}
	}
\caption{If the graph is oriented by a total order, we can use the number of vertices ordered between the endpoints of an edge as a proxy to determine if it can support a $k$-clique. If there are less than $k-2$ vertices ordered between the endpoints, the edge cannot support a $k$-clique. In the example, only the edge $(v_1, v_6)$ could support a $6$-clique using this pruning rule. Therefore, we call the edge \emph{relevant}. However, the graph only contains two $5$-cliques and no $6$-clique because there is no edge $(v_3, v_4)$.}\label{fig:clique-triangles-example-direction}
%
%
\end{figure}

\renewcommand{\arraystretch}{1.8}
\begin{table}[t]
	\centering
	\small
	\tabcolsep=0mm
	\begin{tabular}{lcc} %
		\toprule
		& Work & Depth \\
		\midrule
		Chiba/Nishizeki~\cite{Chiba1985} \hspace{-2.2em} & $\bigo{m \alpha^{k-2}}$ & $\bigo{m \alpha^{k-2}}$ \\
		 Danish et al.~\cite{Danisch2018} & $\bigo{k m \left(\frac{s}{2}\right)^{k-2}}$ & $\bigo{n + \log^{2} n}$ \\
		Shi et al.~\cite{Shi2020} &  $\bigo{m (s(1+\epsilon))^{k-2}}^{\dag}$ & $\bigo{k \log n + \log^2 n}^{\star}$  \\
		\multicolumn{3}{l}{\underline{\textbf{Our Results for Degeneracy} } \S \ref{sec:clique-listing}: } \\
		\emph{Best Work}  \S \ref{sec:orientation}   & $\bigo{k m \left(\frac{s+3-k}{2}\right)^{k-2}}$ &  $\bigo{n + k \log n}$ \\
		\emph{Hybrid} \S \ref{sec:order-hybrid}  & $\bigo{k n s \left(\frac{s+3-k}{2}\right)^{k-2}}$ &  $\bigo{s + k \log n + \log ^2 n}$ \\
		\emph{Best Depth} \S \ref{sec:orientation}   & $\bigo{k m \left( \frac{s(2+\epsilon) +3-k}{2}\right)^{k-2}}$ &  $\bigo{k \log n + \log^2 n}$ \\
		\multicolumn{3}{l}{\underline{\textbf{Our Results for Community Degeneracy}} \S \ref{sec:clique-listing}: } \\
		\emph{Best Work} \S \ref{sec:commdeg-order} & $\bigo{ms + k m \left(\frac{\sigma+4-k}{2}\right)^{k-2}}$ &  $\bigo{n + k \log n}$ \\
		\emph{Hybrid} \S t\ref{sec:commdeg-order} & $\bigo{ms + k n \sigma \left(\frac{\sigma+4-k}{2}\right)^{k-2}}$ &  $\bigo{\sigma + k \log n + \log ^2 n}$ \\
		\emph{Best Depth} \S \ref{sec:commdeg-order}  & $\bigo{ms + k m \left(\frac{(3+\epsilon)\sigma+4-k}{2}\right)^{k-2}}$ &  $\bigo{ k \log n + \log^2 n}$ \\
\bottomrule
	\end{tabular}
	\vspace{1em}
	\caption{Bounds for listing all $k$-cliques in a graph ($k\geq 4$) with degeneracy $s$, community degeneracy $\sigma$, and arboricity $\alpha$. The parameter $\epsilon$ is some positive real \emph{constant}. Recall that $\alpha \leq s < 2\alpha$ and $k\leq \sigma+2 \leq s+1$. Bounds marked with ${}^{\dag}$ hold in expectation and those marked with ${}^{\star}$ hold with high probability in $n$.} \vspace{0em}\label{tab:bounds}
%
%
\end{table}

\section{Community-Centric Clique Listing} \label{sec:clique-listing}

We begin with an informal description of the ideas behind our \emph{community-centric clique listing} algorithm. Afterward, we state the algorithm and bound its work and depth. 
Let us start with a simple observation about \kcliques, on which we will gradually expand. \emph{Any induced subgraph on a clique is a clique as well.}
A slight reformulation of this observation is the primary building block of the algorithm by Chiba and Nishizeki~\cite{Chiba1985}, and subsequently, of Danisch et al.~\cite{Danisch2018} and Shi et al.~\cite{Shi2020}:
If a vertex's neighborhood contains a $(k-1)$-clique, then we have found a \kclique~\cite{Chiba1985}. This insight allows a simple backtracking formulation, where one recursively searches for smaller cliques in the local neighborhood of vertices. 
In the practical implementations \cite{Chiba1985,Danisch2018,Shi2020}, this formulation gives rise to a form of candidate growing. In each recursive call, the current candidate motif grows by one vertex to a larger clique. During the backtracking, the candidate motif shrinks again.

Observe that one is not limited to grow a motif by a single vertex in each step, but can extend the motif by any kind of clique in its neighborhood, such as an edge (\ie a $2$-clique).
Whereas each vertex has a fixed number of neighbors, each edge belongs to a fixed number of triangles. 
In a \kclique, each edge participates in $k-2$ triangles.
If the clique is oriented by a total order, we can refine the statement. The \emph{supporting edge} of a $k$-clique in a graph oriented by a total order is the edge connecting the first and last vertex in the total order. It holds that:
\begin{observation}\label{obs:trianglesSupportingEdge}
	The supporting edge $e$ of a \kclique oriented by a total order, has a community $\community{e}$ of size at least $k-2$. All other edges have a smaller community.
\end{observation}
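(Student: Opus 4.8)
The plan is to reduce the statement to a one-line index computation inside the oriented clique. First I would fix notation by labeling the vertices of the \kclique according to the total order as $v_1 \totalorder v_2 \totalorder \dots \totalorder v_k$. Since these vertices form a clique, every pair is adjacent, so after orienting by $\totalorder$ the directed edge $(v_i, v_j)$ is present exactly when $i < j$. By definition the supporting edge is $e=(v_1, v_k)$, the one joining the first and last vertex of the order.

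Next I would compute the relevant neighbor sets restricted to the clique. For any $v_i$, the out-neighbors among the clique vertices are exactly $\{v_{i+1}, \dots, v_k\}$, and for any $v_j$ the in-neighbors among the clique vertices are exactly $\{v_1, \dots, v_{j-1}\}$. Applying the definition of the community as the intersection of the out-neighbors of the tail with the in-neighbors of the head, the community of $(v_i, v_j)$ internal to the clique is $\{v_{i+1}, \dots, v_{j-1}\}$, of cardinality $j - i - 1$.

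Both halves of the observation then drop out. For the supporting edge $e=(v_1, v_k)$ the formula gives $j - i - 1 = k - 2$; since taking the community in the whole graph $G$ only adds common neighbors lying \emph{outside} the clique, we have $\{v_2, \dots, v_{k-1}\} \subseteq \community{e}$ and hence $\cardinality{\community{e}} \ge k - 2$. For any \emph{other} edge $(v_i, v_j)$ of the clique we have $1 \le i < j \le k$ with $(i,j) \ne (1,k)$, which forces $j - i \le k - 2$ and therefore an internal community of size $j - i - 1 \le k - 3$, strictly below $k - 2$. Thus the supporting edge is the unique edge of the clique whose within-clique community attains size $k-2$, and every other edge is strictly smaller.

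Because the core is a direct counting argument, I do not expect a genuine obstacle. The one point that needs care is keeping straight which graph the community is measured in: the lower bound $\cardinality{\community{e}} \ge k-2$ is naturally a statement about $\community{e}$ taken in $G$ (monotone under adding common neighbors from outside the clique), whereas the strict comparison for the other edges is a statement about the communities \emph{internal} to the clique, since an arbitrary clique edge could of course accumulate many common neighbors elsewhere in $G$. I would therefore state the two halves against the appropriate graph to avoid conflating the internal and ambient communities.
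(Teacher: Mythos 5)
Your argument is correct. Note that the paper offers no proof of this observation at all---it is stated as self-evident---so there is no official argument to compare against; your direct index computation (labeling the clique $v_1 \totalorder \dots \totalorder v_k$, the community of $(v_i,v_j)$ inside the clique is $\{v_{i+1},\dots,v_{j-1}\}$, of size $j-i-1$, which reaches $k-2$ only for $(i,j)=(1,k)$) is precisely the argument the paper leaves implicit. Your closing caveat is also well taken and is not mere pedantry: as literally written, the second sentence of the observation is only true for communities measured within the clique (equivalently, in the subgraph of $G$ induced by the clique's vertices), since a non-supporting edge of the clique may accumulate arbitrarily many common neighbors elsewhere in $G$. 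Splitting the claim as you do---the lower bound $\cardinality{\community{e}}\ge k-2$ for the supporting edge holding in the ambient graph by monotonicity, and the strict comparison for all other edges holding internally---is exactly the reading under which the algorithm's correctness argument (every \kclique is generated, and generated only from its supporting edge) goes through.
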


In \Cref{sec:work-analysis}, we show the following bounds for our algorithm:
\begin{theorem}\label{thm:work}
	Let $G=(V,E)$ be a directed acyclic graph oriented by a total order, such that its largest community has size $\gamma$ and its largest out-degree is $\tilde s$. Then, the work performed by \Cref{alg:kcliqueListing} is
	\begin{displaymath}
		\bigo{m(\tilde s+\gamma^2) + k m \left(\frac{\gamma +4-k}{2}\right)^{k-2}} \enspace ,
	\end{displaymath}
	and its depth is 
	\begin{displaymath}
		\bigo{\log^2 n + k \log \gamma} \enspace .
	\end{displaymath}
\end{theorem}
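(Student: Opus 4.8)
The plan is to analyze an execution of \Cref{alg:kcliqueListing} as a recursion tree whose nodes are the partial cliques explored, treating the one-time preprocessing and the recursive search separately and bounding work and depth for each. The soundness of the whole scheme rests on \Cref{obs:trianglesSupportingEdge}: since the supporting edge of a \kclique has a community of size at least $k-2$, an edge whose community (and hence whose order-interval) is too small can be discarded, and this is exactly the pruning that the work bound must exploit.

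First the preprocessing, which I expect to yield the additive $\bigo{m(\tilde s+\gamma^2)}$ term. For every edge $e=(u,v)$ I would compute $\community{e}=\outneighbors{u}\cap\inneighbors{v}$ by intersecting two neighbor lists of size at most $\tilde s$, costing $\bigo{\tilde s}$ per edge and $\bigo{m\tilde s}$ in total. To set up each recursive call I would materialize the subgraph induced on $\community{e}$ together with the data needed to test relevance of its internal edges; as $\cardinality{\community{e}}\le\gamma$, this is $\bigo{\gamma^2}$ per edge, hence $\bigo{m\gamma^2}$ overall.

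The heart of the argument is bounding the recursive search by $\bigo{k\,m\,\bigl(\tfrac{\gamma+4-k}{2}\bigr)^{k-2}}$. I would set up a recurrence on the clique size still sought. At the top level the algorithm ranges over the relevant supporting edges of the whole graph, of which there are at most $m$, which contributes the leading factor $m$. For the internal levels, the crucial point is that the relevance test uses the number of vertices ordered strictly between an edge's endpoints as an upper bound on its community size, and that this quantity shrinks in lockstep with the clique size: when a supporting edge is chosen, both endpoints and everything outside the order-interval they span are discarded, so a call seeking a $k'$-clique operates in a community of size at most $\gamma-(k-k')$. Counting the relevant edges of such a community—the pairs whose span is at least $k'-2$—and inserting this size bound, I would show the branching at each level is at most $\bigl(\tfrac{\gamma+4-k}{2}\bigr)^2$ uniformly, because the shrinking community and the shrinking span requirement exactly compensate. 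Multiplying these per-level bounds over the $\Theta(k)$ recursion levels gives $\bigl(\tfrac{\gamma+4-k}{2}\bigr)^{k-2}$, and charging $\bigo{k}$ work at each leaf for reporting a clique yields the claimed bound. For the depth, every operation inside a recursive call—intersecting neighbor lists, filtering relevant edges, compacting candidate sets—acts on sets of size at most $\gamma$ and so has depth $\bigo{\log\gamma}$; with $\bigo{k}$ levels the search contributes $\bigo{k\log\gamma}$, while the parallel preprocessing (ordering the graph, computing all communities, building the induced subgraphs) is dominated by a $\bigo{\log^2 n}$ term.

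The step I expect to be the main obstacle is pinning down the exact base $\tfrac{\gamma+4-k}{2}$ in the per-level branching bound. The delicate part is simultaneously (i) counting the relevant edges of a community of given size tightly enough to capture the factor $\tfrac12$, and (ii) proving the telescoping claim that the community loses at least two vertices per level, so that $\gamma-(k-k')$ is a valid size bound at the depth corresponding to clique size $k'$. Obtaining both the additive shift $+4-k$ and the division by $2$ at once—rather than a looser $\tfrac{(\gamma+4-k)^2}{2}$ per level, which would cost a spurious $2^{\Theta(k)}$ factor—requires careful bookkeeping of the off-by-one terms relating span to community size, and this is where the pruning criterion must be used most precisely.
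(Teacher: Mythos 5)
Your overall architecture coincides with the paper's: the additive $\bigo{m(\tilde s+\gamma^2)}$ preprocessing term, the recursion in which the candidate set loses two vertices per level while the span requirement $c$ also drops by two (so that $\cardinality{I}-c+2\le\gamma-k+4$ stays invariant — this is exactly the induction in the paper's \Cref{lem:listing-cost}), the $\bigo{k}$ cost per listed clique, and the $\bigo{\log^2 n + k\log\gamma}$ depth accounting. However, there is a genuine gap at precisely the step you flag as the main obstacle: you assert, but never prove, the amortized branching bound
\[
\sum_{e\in\relevantedge{G[I]}{c}}\cardinality{\relevantedge{G[I\cap\community{e}]}{c-2}} \;\le\; \left(\frac{\cardinality{I}-c}{2}\right)^{2}\,\cardinality{\relevantedge{G[I]}{c}} \enspace ,
\]
and your stated plan for it (``careful bookkeeping of the off-by-one terms relating span to community size'') cannot deliver it. Counting relevant pairs inside each community separately — which is what pure bookkeeping gives — yields only $\binom{\cardinality{\community{e}}-c+2}{2}\approx\frac{(\cardinality{\community{e}}-c+2)^2}{2}$ per edge (the paper's weaker \Cref{lem:upperBoundSumRelevantInducedEdgesCommunityDegeneracy}); that is a factor $2$ worse per level than the bound above, and compounded over the $\Theta(k)$ levels it reintroduces exactly the spurious $2^{\Theta(k)}$ factor you say you want to avoid.

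The missing idea (the paper's \Cref{lem:relevantEdgeRecursion}) is a double-counting argument that essentially uses the orientation, which your sketch never exploits beyond defining communities. One rewrites the sum over relevant edges as a sum over relevant out-vertices $u$ and, for each $u$, over the in-vertices $v$ with $(u,v)\in\relevantedge{G}{c}$; the key structural fact is $\community{u,v}\subseteq\outneighbors{u}[<v]$, i.e., every common neighbor is an out-neighbor of $u$ ordered below $v$. From this one shows that either $v\in\relevantin{\outneighbors{u}}{c-1}$ or else $\relevantpair{\community{u,v}}{c-2}$ is empty and the term contributes nothing, so the inner sum collapses to a count of relevant pairs \emph{within the single set} $\outneighbors{u}$. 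Applying the Gauss-sum identity of \Cref{cor:countRelevantPairs} twice — once to extract the outer factor $\frac{n-c}{2}$, once inside $\outneighbors{u}$ — produces the two factors of $\frac{n-c}{2}$, each carrying the crucial $\tfrac12$, and the sum re-assembles into $\cardinality{\relevantedge{G}{c}}$. Without this (or an equivalent) argument your recurrence does not close with the claimed base. A further, more cosmetic slip: in a graph oriented by a total order the in-degree is not bounded by $\tilde s$, so the community of $(u,v)$ cannot be obtained by ``intersecting two neighbor lists of size at most $\tilde s$''; one instead iterates over $\outneighbors{u}$ and probes for the edges $(w,v)$, which recovers the $\bigo{m\tilde s}$ term you claim.
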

In \Cref{sec:graph-orientation}, we discuss different approaches to to orient a graph, leading to the work/depth tradeoffs in \Cref{tab:bounds}. 

\subsection{Algorithm Construction}

\Cref{alg:kcliqueListing,alg:recursiveListing} summarize our clique listing algorithm. Input to \Cref{alg:kcliqueListing} are a graph \textsc{Dag} oriented by a total order and an integer $k>3$. 
First, construct the edge communities and sort them (this speeds up the later intersection operations). Second, loop over all edges and recurse on their communities (\cref{line:loop:kclique}) with the recursive clique counting \Cref{alg:recursiveListing}.

\begin{algorithm}
	\caption{Listing all $k$-cliques in the oriented graph \textsc{Dag}}\label{alg:kcliqueListing}
	\begin{algorithmic}[1]
		\State Build the communities and sort them\label{line:buildEdgeLists:kclique}
		\ParForAll{edges $e$ with at least $k-2$ triangles}\label{line:loop:kclique}
			\State\Call{RecursiveCount}{\textsc{Dag}$, \community{e}, k-2$}\label{line:recCall:kclique}
		\EndFor
	\end{algorithmic}
\end{algorithm}

The recursive procedure is shown in \Cref{alg:recursiveListing}. As input, it takes a directed graph \textsc{Dag}, a set of candidate vertices $I$ and a constant $c > 0$, stating the number of vertices still required to complete a \kclique. The recursion has two base cases: for $c=1$ (\cref{line:basecase1:recList}), where each vertex in the candidate set completes the partial motif to a \kclique and for $c=2$ (\cref{line:basecase2:recList}), where each connected pair of vertices completes the \kclique.

Given a set of vertices $I$ with a total order, the \emph{distance function} $\delta_I:I \times I \mapsto \{0, \dotsc, \cardinality{I-2}\}$ maps each pair of vertices $(u,v)$ to the number of elements in $V$ that are ordered between $u$ and $v$ in the total order. We keep the set $I$ in a sorted array. Then, for any two candidates $u$ and $v$, we compute $\delta_I(u, v)$ from their indices in $I$.

In the recursive case, iterate over all vertex pairs $(u,v)$, whose distance $\delta_I(u,v)$ to each other, given the total order and the candidate set $I$, allows them to support a \kclique (\cref{line:loop:recList}). For each such pair, probe the existence of the corresponding edge (\cref{line:existenceTest:recList}) and, if successful, intersect the candidate set with the community of the edge (\cref{line:intersection:recList}) to create a new candidate set $I'$. Then, make the next recursive call with the new candidate set $I'$, requiring $c-2$ vertices to complete a \kclique (\cref{line:recursion:recList}).

See \Cref{fig:6cliqueExample} for an example of an execution of \Cref{alg:kcliqueListing}.
\begin{algorithm}
	\caption{Recursively search for cliques of size $c$ in \textsc{Dag}[$I$].}\label{alg:recursiveListing}
	\begin{algorithmic}[1]
			\If{$c=1$}
				\State For every vertex $v$ in $I$, \Return a clique.
					\label{line:basecase1:recList}
			\ElsIf{$c=2$}
				\State For every edge $(u,v)$ in $G[I]$, \Return a clique.\label{line:basecase2:recList}
			\EndIf
			\ForAll{pairs $e=(u,v)\in I\times I$ s.t. $\delta_I(u,v) \ge c-2$}\label{line:loop:recList}
				\If{$e$ is an edge in \textsc{Dag}}\label{line:existenceTest:recList}
					\State $I'\gets I\cap \community{e}$ \label{line:intersection:recList}
					\State\Call{RecursiveCount}{\textsc{Dag}$,I', c-2$}\label{line:recursion:recList}
				\EndIf
			\EndFor
	\end{algorithmic}
	
\end{algorithm}

\begin{figure}[h]
		\captionsetup[subfigure]{}
		\centering
	\begin{subfigure}{1\linewidth}
		\centering
		\resizebox{0.42\textwidth}{!}{%
			\boldmath
			\begin{tikzpicture}[->]
				\node[current]   (v1) at ({-2*cos(0*60-60)},{2*sin(0*60-60)}) {$v_1$};
				\node[candidate] (v2) at ({-2*cos(1*60-60)},{2*sin(1*60-60)}) {$v_2$};
				\node[candidate] (v3) at ({-2*cos(2*60-60)},{2*sin(2*60-60)}) {$v_3$};
				\node[candidate] (v4) at ({-2*cos(3*60-60)},{2*sin(3*60-60)}) {$v_4$};
				\node[candidate] (v5) at ({-2*cos(4*60-60)},{2*sin(4*60-60)}) {$v_5$};
				\node[current]   (v6) at ({-2*cos(5*60-60)},{2*sin(5*60-60)}) {$v_6$};
				
				\foreach \from/\to in {
					v1/v2, v1/v3, v1/v4, v1/v5, v1/v6,
					v2/v3, v2/v4, v2/v5, v2/v6,
					v3/v5, v3/v6,
					v4/v5, v4/v6,
					v5/v6}
				\draw[ecandidate] (\from) -- (\to);

				\foreach \from/\to in {v1/v6}
				\draw[eold] (\from) -- (\to);
				
				\foreach \from/\to in {v1/v6}
				\draw[ecurrent] (\from) -- (\to);
			\end{tikzpicture}
		}
		
		\caption{Loop over all edges that can support a $4$-clique, \ie $4$ triangles, and call \Cref{alg:recursiveListing}. Only edge $(v_1,v_6)$ has a community of size at least $4$. For the call to \Cref{alg:recursiveListing}, $I=\{v_2, v_3, v_4, v_5\}$ and $c=4$. \vspace{0.6em}}
	\end{subfigure}
	\centering
	\begin{subfigure}{1\linewidth}
		\centering
		\resizebox{0.42\textwidth}{!}{%
			\boldmath
			\begin{tikzpicture}[->]
				\node[old]       (v1) at ({-2*cos(0*60-60)},{2*sin(0*60-60)}) {$v_1$};
				\node[current]   (v2) at ({-2*cos(1*60-60)},{2*sin(1*60-60)}) {$v_2$};
				\node[candidate] (v3) at ({-2*cos(2*60-60)},{2*sin(2*60-60)}) {$v_3$};
				\node[candidate] (v4) at ({-2*cos(3*60-60)},{2*sin(3*60-60)}) {$v_4$};
				\node[current]   (v5) at ({-2*cos(4*60-60)},{2*sin(4*60-60)}) {$v_5$};
				\node[old]       (v6) at ({-2*cos(5*60-60)},{2*sin(5*60-60)}) {$v_6$};
				
				\foreach \from/\to in {
					v1/v2, v1/v3, v1/v4, v1/v5, v1/v6,
					v2/v3, v2/v4, v2/v5, v2/v6,
					v3/v5, v3/v6,
					v4/v5, v4/v6,
					v5/v6}
				\draw[ecandidate] (\from) -- (\to);

				\foreach \from/\to in {v1/v6, v1/v2, v1/v5, v2/v5, v2/v6, v5/v6}
				\draw[eold] (\from) -- (\to);
				
				\foreach \from/\to in {v2/v5}
				\draw[ecurrent] (\from) -- (\to);
			\end{tikzpicture}
		}
		\caption{Iterate over all pairs in the candidate set with a distance of $2$ that form an edge. Only the edge $(v_2,v_5)$ has a distance of $2$.\vspace{0.6em}}
	\end{subfigure}
	\begin{subfigure}{1\linewidth}
		\centering
		\resizebox{0.42\textwidth}{!}{%
			\boldmath
			\begin{tikzpicture}[->]
				\node[old]        (v1) at ({-2*cos(0*60-60)},{2*sin(0*60-60)}) {$v_1$};
				\node[old]        (v2) at ({-2*cos(1*60-60)},{2*sin(1*60-60)}) {$v_2$};
				\node[current]    (v3) at ({-2*cos(2*60-60)},{2*sin(2*60-60)}) {$v_3$};
				\node[current]    (v4) at ({-2*cos(3*60-60)},{2*sin(3*60-60)}) {$v_4$};
				\node[old]        (v5) at ({-2*cos(4*60-60)},{2*sin(4*60-60)}) {$v_5$};
				\node[old]   	  (v6) at ({-2*cos(5*60-60)},{2*sin(5*60-60)}) {$v_6$};

				\foreach \from/\to in {v1/v6, v1/v2, v1/v5, v2/v5, v2/v6, v5/v6, v1/v3, v1/v4, v2/v3, v2/v4, v3/v5, v3/v6, v4/v5, v4/v6}
				\draw[eold] (\from) -- (\to);
				
			\end{tikzpicture}
		}
		
		\caption{Among the remaining candidates, there is only one pair: $(v_3,v_4)$. The edge between the pair does not exist. Hence \cref{alg:recursiveListing} aborts and reports that there is no $6$-clique.}
	\end{subfigure}
	\caption{Example of using \Cref{alg:kcliqueListing} to search (in vain) for a $6$-clique in the above graph. Black vertices belong to the candidate set for the next call to \Cref{alg:recursiveListing}, red vertices and edges are being considered for extending the clique, and grey vertices and edges have been already visited.  \vspace{0em} }\label{fig:6cliqueExample}
\end{figure}
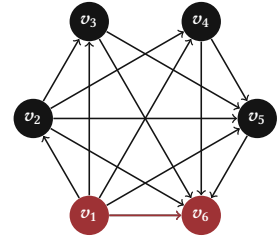
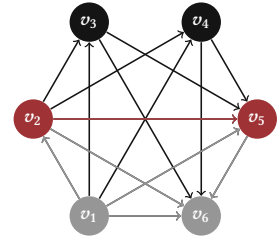
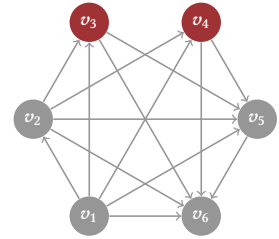

\begin{figure}[t]
		\centering
		\resizebox{0.53\linewidth}{!}{
			\boldmath
		\begin{tikzpicture}[->]
			\node[candidate] (v1) at(0.4, 0) {$v_1$};
			\node[candidate] (v2) at (0.7,1.5) {$v_2$};
			\node[candidate] (v3) at (2,2.6) {$v_3$};
			\node[candidate] (v4) at (3.5,2.6) {$v_4$};
			\node[candidate]       (v5) at (4.8,1.5) {$v_5$};
			\node[candidate]       (v6) at (5.1,0) {$v_6$};
			
			\draw[ecandidate] (v1) -- (v2);
			\draw[ecandidate] (v1) -- (v3);
			\draw[ecandidate] (v1) -- (v4);
			\draw[ecurthk] (v1) -- (v5);
			\draw[ecurthk] (v1) -- (v6);
			
			\draw[ecandidate] (v2) -- (v5);
			\draw[ecandidate] (v3) -- (v4);
			\draw[ecandidate] (v3) -- (v6);
			\draw[ecandidate] (v4) -- (v6);
		\end{tikzpicture}	
	}
	\vspace{-0em} 
	\caption{In the example graph, the edges relevant with respect to $3$ are $\relevantedge{G}{3}=\{(v_1, v_5), (v_1, v_6)\}$ (marked red). These are the edges that could potentially support a $5$-clique. The pairs $\relevantpair{G}{3}$ relevant with respect to $3$ also include $(v_2, v_6)$. \vspace{-0em}}  \label{fig:relevantEdgesPairs}
\end{figure}
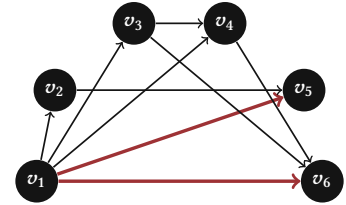

\subsection{Work/Depth Analysis}\label{sec:work-analysis}

Let us first bound the cost of the preprocessing in \Cref{alg:kcliqueListing}.
We can compute the triangles of a graph in $O(m\tilde s)$ work and $O(\log^2 n)$ depth~\cite{Shi2020}, which gives us the communities. Sorting the communities takes $\bigo{m \gamma \log \gamma}$ work and $\bigo{\log \gamma}$ depth~\cite{DBLP:journals/siamcomp/Cole88}.
We build for each edge $e$ an adjacency matrix of $G[\community{e}]$ and rename the vertices to be consecutive integers. Next, for each edge $e'$ in such a subgraph $G[\community{e}]$, we build a boolean indicator table for the vertices in its ``local'' community $C_{G[\community{e}]}(e')$.
This preprocessing speeds up the intersections and edge probing. It costs $\bigo{m \gamma^2}$ work. %

Before we bound the work and depth of the recursive \Cref{alg:recursiveListing}, we need some additional notation. 
Given a set of vertices $V$, a pair $(u,v)$ of vertices is \emph{relevant} with respect to $c\in \mathbb{N}_0$, if and only if $\delta_V(u,v) \ge c$, meaning that there are at least $c$ vertices ordered between $u$ and $v$. The set of \emph{relevant pairs} of $V$ with respect to $c$ and the total order of $V$ is $\relevantpair{V}{c}$. An edge $e=(u,v)$ is relevant with respect to $c$, if it forms a relevant pair with respect to $c$. Note that an edge that is not relevant with respect to $k-2$ cannot support a $k$-clique and that \Cref{alg:recursiveListing} only recurses on the relevant edges (with respect to $c-2$). The set of relevant edges with respect to $c$ in the graph $G=(V,E)$ is $\relevantedge{G}{c}$. See \Cref{fig:relevantEdgesPairs} for an illustration of the relevant pairs and edges.

Next, let us bound the work and depth of \Cref{alg:recursiveListing}.
For each relevant edge $e$, computing the intersection of $I$ with $\community{e}$ takes $\bigo{\log \gamma}$ depth and $\bigo{|\community{e}| + |I|}$ work: For each element in $I$ use the indicator table for $C(e)$ to test if it is in $C(e)$. Perform a parallel prefix sum~\cite{Reif:1993:SPA:562546} to gather the elements in the intersection. The work for edge probing is $\bigo{\relevantpair{I}{c-2}}$ and the depth is $\bigo{1}$. At the base cases, the work is $\bigo{k}$ per listed clique, and the depth is $\bigo{1}$. 

The recursion in \Cref{alg:recursiveListing} has depth $\lfloor \frac{k-2}{2}\rfloor$. Hence, the critical path has length $\bigo{\log n + k+ k \log(\gamma+1) }$. As long as $\gamma\neq 0$, the additive $k$ term is negligible. Next, we express the work of \Cref{alg:recursiveListing} recursively.

Let $W(c, I)$ be the work of \Cref{alg:recursiveListing} for candidate set $I$ and parameter $c$. We express the work recursively as follows:
\small
\begin{align*}
	W(1,I) &\le \bigo{k\enspace|I|}\\
	W(2,I) &\le \bigo{\cardinality{\relevantpair{I}{0}} + k\enspace\cardinality{\relevantedge{G[I]}{0}}}\\
	W(c,I) &\le \bigo{\cardinality{\relevantpair{I}{c-2}} \enspace + \sum_{e\in \relevantedge{G[I]}{c-2}} \cardinality{\community{e}}+\cardinality{I} + W(c-2, I \cap \community{e}) }
\end{align*}\normalsize
By telescoping this formula, we can see that the key to tackling this recursion is a bound on the following sum:
$$\sum_{e\in\relevantedge{G}{c}}\cardinality{\relevantedge{G[\community{e}]}{c-2}} \enspace .$$
The sum corresponds to the loop in the algorithm that iterates over the relevant edges. For each edge $e$, the term $\cardinality{\relevantedge{G[\community{e}]}{c-2}}$ is a bound on the number of relevant edges in the subgraph that the algorithm recurses on. 
In \Cref{sec:combinatorics}, we establish the following bounds on this quantity. 
\begin{lemma}\label{lem:relevantEdgeRecursion}
For a graph $G=(V,E)$ oriented by a total order and any $c\geq2$, we have that:
	\begin{align*}
		\sum_{e\in\relevantedge{G}{c}}\cardinality{\relevantedge{G[\community{e}]}{c-2}} \enspace & \le \sum_{e\in\relevantedge{G}{c}}\cardinality{\relevantpair{\community{e}}{c-2}} \\
		&\le \left(\frac{n - c}{2}\right)^2\cardinality{\relevantedge{G}{c}}.
	\end{align*}
\end{lemma}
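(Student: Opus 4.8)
The plan is to treat the two inequalities separately, since the first is essentially definitional while the second carries all the difficulty. For the first inequality, observe that every relevant edge of $G[\community{e}]$ is in particular a relevant pair of $\community{e}$: being relevant with respect to $c-2$ is a property of a pair and its position in the total order, and an edge is just a pair that also happens to be present. Hence $\relevantedge{G[\community{e}]}{c-2}\subseteq\relevantpair{\community{e}}{c-2}$, so the cardinalities compare term by term, and summing over $e\in\relevantedge{G}{c}$ gives the first line.

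For the second inequality I would first compute a single term exactly. Fix a relevant edge $e=(u,v)$ and set $\gamma_e=\cardinality{\community{e}}$. Because $G$ is oriented by a total order and $\community{e}$ is the intersection of the out-neighbors of $u$ with the in-neighbors of $v$, every community vertex is ordered strictly between $u$ and $v$; in particular $\gamma_e\le n-2$. A pair $(x,y)$ of $\community{e}$ is relevant with respect to $c-2$ exactly when at least $c-2$ community vertices lie between $x$ and $y$, so I would parametrize such a pair by the number $L$ of community vertices up to and including $x$ and the number $R$ from $y$ onward. This is a bijection onto $\{(L,R):L,R\ge 1,\ L+R\le\gamma_e-c+2\}$, giving $\cardinality{\relevantpair{\community{e}}{c-2}}=\binom{\gamma_e-c+2}{2}$.

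The step I expect to be the main obstacle is turning these per-edge counts into the uniform factor $\left(\frac{n-c}{2}\right)^2$. A term-by-term bound does not suffice: since $\gamma_e$ can be as large as $n-2$, the count $\binom{\gamma_e-c+2}{2}$ is of order $\tfrac12(n-c)^2$, which is \emph{twice} the target $\tfrac14(n-c)^2$; so the inequality cannot hold edge by edge and must be amortized over $\relevantedge{G}{c}$. The structural fact I would exploit is that a large community is costly: if $w\in\community{e}$ is preceded by at least $c$ other community vertices, then $(u,w)$ is itself a relevant edge with respect to $c$ (it is present because $w$ is an out-neighbor of $u$, and it has at least $c$ vertices ordered between its endpoints), and symmetrically for edges of the form $(\,\cdot\,,v)$. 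Thus a community of size $\gamma_e$ forces $\Omega(\gamma_e-c)$ relevant edges into $\relevantedge{G}{c}$, which is exactly the reservoir needed to pay for the large term $\binom{\gamma_e-c+2}{2}$. Concretely, each counted pair occupies a distinct $(L,R)$ slot with arm sizes summing to at most $\gamma_e-c+2\le n-c$, so AM-GM caps the product of the arms by $\left(\frac{n-c}{2}\right)^2$; the delicate point is arranging the charging to the forced relevant edges so that it is (nearly) injective and the leftover factor of two is genuinely absorbed rather than double counted across edges sharing an endpoint $u$ or $v$.

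Finally I would dispose of the degenerate cases: whenever $\gamma_e<c$ the closed form $\binom{\gamma_e-c+2}{2}$ evaluates to $0$, matching the fact that $\relevantpair{\community{e}}{c-2}$ is empty, and the boundary value $c=2$ should be checked directly against the closed form so that the statement holds for every $c\ge 2$ as claimed.
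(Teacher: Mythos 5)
Your first inequality, the closed form $\cardinality{\relevantpair{\community{e}}{c-2}}=\binom{\cardinality{\community{e}}-c+2}{2}$, and the diagnosis that a term-by-term bound overshoots the target by a factor of two are all correct, and your ``forced edges'' observation is indeed the right structural fact (the paper uses it in the form $\relevantin{\outneighbors{u}}{c-1}\subseteq\relevantedgein{G,u}{c}$). But the proposal stops precisely at the step that constitutes the proof: you defer the charging scheme as a ``delicate point,'' and as sketched it would not go through. The forced edges $(u,w)$ produced by a community $\community{u,v}$ all share the tail $u$, and the same reservoir is claimed by every relevant edge $(u,v'), (u,v''),\dots$ with that tail; distributing each community's roughly $\tfrac12(\gamma_e-c)^2$ pairs over its roughly $\gamma_e-c$ forced edges therefore double counts across communities, which is exactly the failure mode you name without resolving. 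The AM-GM remark does not repair this: it bounds the product $L\cdot R$ of the two arms of a single pair, a quantity that never appears in the count $\binom{\gamma_e-c+2}{2}$ you need to amortize.

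What is missing is a regrouping that makes the charge injective, and this is what the paper's proof supplies. After extracting one factor $\frac{n-c}{2}$ per edge (via $\cardinality{\relevantpair{\community{e}}{c-2}}=\cardinality{\relevantout{\community{e}}{c-2}}\cdot\frac{\cardinality{\community{e}}-c+2}{2}$ and $\cardinality{\community{e}}\le n-2$), group the remaining sum $\sum_{u}\sum_{v\in\relevantedgein{G,u}{c}}\cardinality{\relevantout{\community{u,v}}{c-2}}$ by the common tail $u$. For fixed $u$: (i) any head $v\in\relevantedgein{G,u}{c}$ with fewer than $c$ out-neighbors of $u$ ordered before it contributes zero, since then $\community{u,v}\subseteq\outneighbors{u}[<v]$ has fewer than $c$ elements and cannot contain a pair relevant with respect to $c-2$; (ii) for every other head, $\relevantout{\community{u,v}}{c-2}\subseteq\relevantout{\outneighbors{u}[\le v]}{c-1}$, and the latter set is in bijection with the relevant pairs of $\outneighbors{u}$ (with respect to $c-1$) whose larger endpoint is $v$. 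Summing over $v$ therefore counts each relevant pair of $\outneighbors{u}$ at most once --- injectivity comes for free because distinct heads correspond to disjoint sets of pairs. By the pair-count formula applied to $\outneighbors{u}$, this total is $\cardinality{\relevantin{\outneighbors{u}}{c-1}}\cdot\frac{\cardinality{\outneighbors{u}}-c+1}{2}\le\cardinality{\relevantedgein{G,u}{c}}\cdot\frac{n-c}{2}$, where the last step uses your forced-edge containment; summing over $u$ then gives $\left(\frac{n-c}{2}\right)^2\cardinality{\relevantedge{G}{c}}$. So your ingredients are the paper's ingredients, but without this per-tail regrouping (or an equivalent device) the second inequality remains unproved.
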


We focus on bounding the total cost incurred for listing the cliques at the leaves of the recursion for $c\ge 2$. 
The proofs bounding the remaining cost (associated with the computation of intersections and edge testing) are similar and can be found in \Cref{sec:work-recursive-bounds}.
\begin{observation}\label{obs:level-function}
	The number of non-trivial recursive calls of  \Cref{alg:recursiveListing} (\ie excluding the base case $c=1$) is given by the function $r(c)= \left\lfloor\frac{c}{2}\right\rfloor$. Moreover, note that $r(c) + 1 = r(c+2)$.
\end{observation}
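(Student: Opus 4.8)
The plan is to read off a simple recurrence for $r(c)$ directly from the control flow of \Cref{alg:recursiveListing} and then solve it by induction on the parity of $c$. First I would fix the interpretation: $r(c)$ counts the number of non-trivial recursive \emph{levels} reached when the procedure is invoked with parameter $c$, where a level is \emph{trivial} exactly when it is the base case $c=1$ (which performs no branching and merely enumerates single vertices). Since $r$ depends only on $c$ and not on the candidate set $I$, this must be the depth of the recursion measured in non-trivial calls, rather than the total (graph-dependent, exponential) number of calls along the whole recursion tree.

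Next I would extract the recurrence. Inspecting the algorithm, a call with parameter $c\ge 3$ falls into the recursive case and, before terminating, issues calls with parameter $c-2$ on \cref{line:recursion:recList}; the base case $c=2$ on \cref{line:basecase2:recList} is itself a non-trivial call but spawns no further recursion, while the base case $c=1$ on \cref{line:basecase1:recList} is trivial and excluded from the count. This yields
\begin{align*}
	r(1) &= 0, & r(2) &= 1, & r(c) &= r(c-2) + 1 \quad (c \ge 3).
\end{align*}
Because subtracting $2$ preserves parity, the recursion descends through parameters of the same parity as $c$ until it reaches the corresponding base case: $c=1$ when $c$ is odd, and $c=2$ when $c$ is even.

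The remaining step is to solve the recurrence by induction, treating the two parities separately. For even $c$ the parameters encountered are $c, c-2, \dots, 2$, all $c/2$ of which are non-trivial, giving $r(c)=c/2=\lfloor c/2\rfloor$. For odd $c$ the parameters are $c, c-2, \dots, 3, 1$, of which the final $c=1$ is trivial, leaving $(c-1)/2=\lfloor c/2\rfloor$ non-trivial calls; the anchors $r(1)=0$ and $r(2)=1$ match $\lfloor 1/2\rfloor$ and $\lfloor 2/2\rfloor$, and the inductive step uses $\lfloor c/2\rfloor=\lfloor (c-2)/2\rfloor+1$. Finally, the shift identity $r(c)+1=r(c+2)$ is immediate, being the recurrence applied at $c+2$, and equivalently following from $\lfloor (c+2)/2\rfloor=\lfloor c/2\rfloor+1$.

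I do not anticipate a genuine obstacle here; the only point that requires care is the bookkeeping of which base case is counted. Specifically, one must not accidentally include the trivial $c=1$ level for odd $c$ (which would overcount by one), yet must remember that the $c=2$ level, though a base case, is a genuine non-trivial call that must be counted. Getting this parity-dependent boundary right is exactly what makes the single closed form $\lfloor c/2\rfloor$ cover both cases uniformly.
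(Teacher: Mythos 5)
Your proof is correct: the paper states this observation without proof, treating it as immediate from the control flow of \Cref{alg:recursiveListing} (each call decrements $c$ by $2$ until reaching the base case $c=1$ or $c=2$), and your recurrence $r(1)=0$, $r(2)=1$, $r(c)=r(c-2)+1$ solved by a parity split is precisely a careful write-up of that implicit argument. Your disambiguation of ``number of non-trivial recursive calls'' as the number of non-trivial levels along a call chain (rather than the graph-dependent size of the whole recursion tree) is also the intended reading, consistent with the paper's earlier remark that the recursion has depth $\left\lfloor\frac{k-2}{2}\right\rfloor$.
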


\noindent
	For some constant $a$, we define the following function to help unifying the analysis when $c$ is even or odd:
	\[b(c, i) \leq \begin{cases} 
      a\cdot k & \text{if $c$ is even} \\
      a\cdot k \cdot i & \text{else.}
  	  \end{cases}
	\]
	The constant $a$ is chosen such that the work to list a \kclique is at most $ak$ and the cost of listing $i=|I|$ \kcliques is at most $ a 
\cdot k \cdot i$.

\begin{lemma}\label{lem:listing-cost}
Let $L(c, I)$ be the work of the algorithm incurred at the leaves of the recursion in \Cref{alg:recursiveListing}. Then, we have for any $c\ge 2$,
$$
		L(c, I) \le \left(\frac{\cardinality{I}-c+2}{2}\right)^{2r(c-2)}\cardinality{\relevantedge{G[I]}{c-2}}\enspace  b (c, \cardinality{I}-2r(c)) \enspace . \label{eq:work:listing}
$$
\end{lemma}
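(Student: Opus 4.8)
The plan is to prove the bound by induction on $c$, using that \Cref{alg:recursiveListing} produces leaf work only at its two base cases. For $c\ge 3$ the current level lists no cliques itself, so the leaf work splits exactly along the recursive calls:
\[
  L(c, I) \;=\; \sum_{e\in\relevantedge{G[I]}{c-2}} L\bigl(c-2,\ I\cap\community{e}\bigr),
\]
because the algorithm recurses on each relevant edge $e$ (with respect to $c-2$) with the new candidate set $I' = I\cap\community{e}$. The two base cases are $c=2$, where $L(2,I)=ak\,\cardinality{\relevantedge{G[I]}{0}}$ matches the formula since $r(0)=0$ and $b(2,\cdot)=ak$, and $c=3$, which I verify directly: expanding one step gives $L(3,I)=\sum_e ak\,\cardinality{I\cap\community{e}}$, and the endpoint bound below turns this into $ak\,(\cardinality{I}-2)\,\cardinality{\relevantedge{G[I]}{1}}$, exactly $b(3,\cardinality{I}-2)\,\cardinality{\relevantedge{G[I]}{1}}$.

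I would isolate three structural facts first. (i) For any edge $e=(u,v)$ of $G[I]$ the endpoints lie in $I$ but not in $\community{e}$ (a vertex is never its own out-/in-neighbour), so $\cardinality{I\cap\community{e}}\le\cardinality{I}-2$. (ii) Because $G[I]$ is induced, the community taken inside $G[I]$ coincides with $I\cap\community{e}$; hence the subgraph the algorithm recurses on is exactly $G[I\cap\community{e}]$, which lets me apply \Cref{lem:relevantEdgeRecursion} to the graph $G[I]$ with parameter $c-2$. (iii) From \Cref{obs:level-function}, $r(c-2)=r(c-4)+1$ and $\cardinality{I}-2r(c)=\cardinality{I}-2-2r(c-2)$, which is what makes the index arithmetic close.

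For the inductive step $c\ge 4$, I substitute the hypothesis for $L(c-2,I_e)$ with $I_e=I\cap\community{e}$. Using $\cardinality{I_e}\le\cardinality{I}-2$ from fact (i), the base factor obeys $\bigl(\tfrac{\cardinality{I_e}-c+4}{2}\bigr)^{2r(c-4)}\le\bigl(\tfrac{\cardinality{I}-c+2}{2}\bigr)^{2r(c-4)}$ (the base is positive whenever the edge set is nonempty, since a relevant edge forces $\cardinality{I_e}\ge c-2$), so I can pull this shrunk, $e$-independent factor out of the sum. What remains is $\sum_e\cardinality{\relevantedge{G[I_e]}{c-4}}$, which \Cref{lem:relevantEdgeRecursion} bounds by $\bigl(\tfrac{\cardinality{I}-c+2}{2}\bigr)^2\cardinality{\relevantedge{G[I]}{c-2}}$. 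Combining exponents with $2r(c-4)+2=2r(c-2)$ reproduces the claimed base factor and edge count.

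The remaining, and most delicate, point is the parity bookkeeping carried by $b$. When $c$ is even, $b(c-2,\cdot)=b(c,\cdot)=ak$ is constant and nothing further is needed. When $c$ is odd, the hypothesis carries an extra factor $b(c-2,\cdot)=ak\,(\cardinality{I_e}-2r(c-2))$; here I bound $\cardinality{I_e}-2r(c-2)\le\cardinality{I}-2-2r(c-2)=\cardinality{I}-2r(c)$ using facts (i) and (iii), so the factor becomes $b(c,\cardinality{I}-2r(c))$ after the sum. Getting this single $b$-argument to track $\cardinality{I}-2r(c)$ correctly across the even/odd cases is the main obstacle; the rest is the exponent and cardinality algebra already set up by \Cref{lem:relevantEdgeRecursion} and \Cref{obs:level-function}.
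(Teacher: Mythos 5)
Your proof is correct and follows essentially the same route as the paper's own proof: induction on $c$ using the decomposition $L(c,I)\le\sum_{e\in\relevantedge{G[I]}{c-2}}L(c-2,\,I\cap\community{e})$, the bound $\cardinality{I\cap\community{e}}\le\cardinality{I}-2$ to pull the power factor out of the sum, an application of \Cref{lem:relevantEdgeRecursion} to the subgraph $G[I]$, and the index/parity bookkeeping from \Cref{obs:level-function} to rewrite $b(c-2,\cardinality{I}-2-2r(c-2))$ as $b(c,\cardinality{I}-2r(c))$. If anything, your treatment is slightly more careful than the paper's, since you verify the base cases explicitly and justify that the base of the power is positive whenever the relevant-edge set is nonempty.
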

\begin{proof}
The proof is by induction on $c$.
The base cases $c=2$ and $c=3$ are quickly verified. 
Assume now, that the inequality on $L(c', I)$ holds for all $c'\leq c$. We show that it also holds for $c+2$.
	\begin{align*}
		& \enspace L(c,I) \\
		\le  &\sum_{e\in\relevantedge{G[I]}{c-2}}L(c-2, I \cap \community{e}) \\
		\le &\sum_{e\in\relevantedge{G[I]}{c-4}}  \left(\frac{\cardinality{I \cap \community{e}}-c+4}{2}\right)^{2r(c-4)} 
	\cardinality{\relevantedge{G[I \cap \community{e}]}{c-4}}\\
		& \enspace \cdot  b(c-2, \cardinality{I \cap \community{e}}-2r(c-2)) \\%
		\le &\sum_{e\in\relevantedge{G[I]}{c-2}}  \left(\frac{\cardinality{I}-c+2}{2}\right)^{2r(c-4)}\cardinality{\relevantedge{G[I \cap \community{e}]}{c-4}} \\
		&\enspace \cdot  b(c-2, \cardinality{I}-2-2r(c-2)).
	\end{align*}
	Now, we apply \Cref{lem:relevantEdgeRecursion} (on the subgraph $G[I]$):
	\begin{align*}
		&\le \left(\frac{\cardinality{I}-c+2}{2}\right)^{2r(c-4)+2}\cardinality{\relevantedge{G[I]}{c-2}} \enspace b(c-2, \cardinality{I}-2-2r(c-2))
	\end{align*}
	Now, we use \Cref{obs:level-function} for both occurrences of the function $r$. Furthermore, since $c$ and $c-2$ are either both odd or both even, we can replace the $c-2$ in the call to $b$ with $c$:
	\begin{align*}
		L(c,I) &\le \left(\frac{\cardinality{I}-c+2}{2}\right)^{2r(c-2)}\cardinality{\relevantedge{G[I]}{c-2}} \enspace b(c, \cardinality{I}-2r(c)),
	\end{align*}	
	which is of the desired form and concludes the proof.
\end{proof}
See \Cref{sec:work-outer-loop} for a proof of the overall work incurred by \Cref{alg:kcliqueListing}, finishing the proof of \Cref{thm:work}.

\section{Combinatorics}\label{sec:combinatorics}

In this section, we prove \Cref{lem:listing-cost}, which is the key ingredient in the work proof. We begin by introducing the notation that we need for the proof. See also \Cref{fig:topologicalOrder} and \Cref{fig:relevantEdges} for an illustration of the relevant concepts.

\vspace{-1em}
\subsection{Additional Notation}

A vertex $u\in V$ is a \emph{relevant out-vertex} with respect to $c$, if there exists a vertex $v \in V$, such that $(u,v)$ forms a relevant pair with respect to $c$. The set of relevant out-vertices with respect to $c$ and the total order in $V$ is denoted $\relevantout{V}{c}$.
Similarly, a vertex $v\in V$ is a \emph{relevant in-vertex} with respect to $c$, if there exists a vertex $u \in V$, such that $(u,v)$ forms a relevant pair with respect to $c$.
The set of relevant in-vertices with respect to $c$ and the total order in $V$ is denoted $\relevantin{V}{c}$.
See \Cref{fig:topologicalOrder} for an example.

$\relevantedgeout{G}{c}$ are all relevant out-vertices with respect to $c$, that are part of a \emph{relevant edge}, not only a relevant pair. Similarly, $\relevantedgein{G,u}{c}$ are all relevant in-vertices with respect to $c$, that form a relevant edge together with  $u$, see \Cref{fig:relevantEdges}. Formally, we have:
$$\relevantedgeout{G}{c} = \{ u \enspace | \enspace \exists w:\; \enspace (u, w) \in \relevantedge{G}{c}\} \enspace ,$$
$$\relevantedgein{G,u}{c} = \{ v \enspace | \enspace  (u, v) \in \relevantedge{G}{c}\} \enspace .$$

For a set $S$ and a \emph{predicate} $P$, we write $S[P]$ to select the subset of $S$ that satisfies the predicate $P$, \eg	$V[<u]$ contains all vertices in $V$ smaller than $u$ in the order $<$.

\begin{figure}
\vspace{-1em}
		\centering
		\resizebox{0.75\linewidth}{!}{
			\boldmath
			\begin{tikzpicture}[->]
				\node[highLeft] (v1) at (-3,-3) {$v_1$};
				\node[highLeft] (v2) at (-1.5,-3) {$v_2$};
				\node[candidate] (v3) at (0,-3) {$v_3$};
				\node[candidate] (v4) at (1.5, -3) {$v_4$};
				\node[highRight] (v5) at (3, -3) {$v_5$};
				\node[highRight] (v6) at (4.5, -3) {$v_6$};
				
				\draw[ecurthk] (v1) to[out=25,in=155] (v5);
				\draw[ecurthk] (v1) to[out=-25,in=-155] (v6);
				\draw[ecurthk] (v2) to[out=25,in=155] (v6);
				
			\end{tikzpicture}
		}
\vspace{-1em}
		\caption{Ordered between the pairs $(v_1,v_5)$, $(v_1, v_6)$ and $(v_2,v_6)$ are at least $3$ other vertices. Hence, they are all relevant pairs with respect to $3$. That is, they constitute the set $\relevantpair{G}{3}$. %
		Each of the relevant pairs starts with one of the vertices in $\relevantout{V}{3}=\{v_1, v_2\}$ (marked orange) and ends with one of the vertices in $\relevantin{V}{3}=\{v_5, v_6\}$ (marked blue).} \label{fig:topologicalOrder}
\vspace{-1em}
\end{figure}
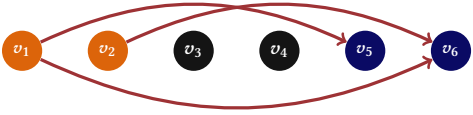

\begin{figure}
		\centering
		\resizebox{0.49\linewidth}{!}{
			\boldmath
		\begin{tikzpicture}[->]
			
			\node[highLeft] (v1) at(0.4, 0) {$v_1$};
			\node[candidate] (v2) at (0.7,1.5) {$v_2$};
			\node[candidate] (v3) at (2,2.6) {$v_3$};
			\node[candidate] (v4) at (3.5,2.6) {$v_4$};
			\node[highRight]       (v5) at (4.8,1.5) {$v_5$};
			\node[highRight]       (v6) at (5.1,0) {$v_6$};

			\draw[ecandidate] (v1) -- (v2);
			\draw[ecandidate] (v1) -- (v3);
			\draw[ecandidate] (v1) -- (v4);
			\draw[ecurthk] (v1) -- (v5);
			\draw[ecurthk] (v1) -- (v6);
			
			\draw[ecandidate] (v2) -- (v5);
			\draw[ecandidate] (v3) -- (v4);
			\draw[ecandidate] (v3) -- (v6);
			\draw[ecandidate] (v4) -- (v6);
		\end{tikzpicture}	
	}
\vspace{-1em}
	\caption{In the example graph $\relevantedge{G}{3}=\{(v_1,v_5),(v_1, v_6)\}$. Hence, the relevant out-vertices with respect to $3$ are $\relevantedgeout{G}{3} = \{v_1\}$ (marked orange). Each such vertex is the head of an edge that has distance $3$. The vertices $\relevantedgein{G,v_1}{3}=\{v_5, v_6\}$ are the tails of those edges (marked blue). }\label{fig:relevantEdges}
\vspace{-1em}
\end{figure}
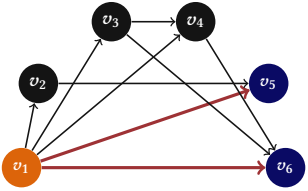

\subsection{Bounds on Relevant Edges and Pairs}

Let us begin with two simple observations on relevant pairs:
\begin{observation}\label{cor:countRelevantVertices}
	In a set of vertices $V$ with a total order, the number of relevant out- and in-vertices with respect to $c\in\mathbb{N}_0$ is:
	\begin{align*}
		\cardinality{\relevantout{V}{c}} &= \cardinality{\relevantin{V}{c}} = \cardinality{V} - (c+1) \enspace .
	\end{align*}
\end{observation}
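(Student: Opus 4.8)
The plan is to reduce this counting to a one-dimensional index calculation. Write the vertices of $V$ in increasing order of the total order as $v_1 < v_2 < \dots < v_n$, where $n = \cardinality{V}$. The key idea is that whether a vertex qualifies as a relevant out-vertex depends only on how many vertices sit above it in the order: to maximize $\delta_V(u,v)$ over all possible partners $v$, one should always pick the topmost vertex $v_n$. So the existential quantifier in the definition of $\relevantout{V}{c}$ collapses to a single best choice of partner.

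First I would make the distance explicit: for $u = v_i$ and any $v = v_j$ with $i < j$, the number of vertices ordered strictly between them is $\delta_V(v_i, v_j) = j - i - 1$. Hence, fixing $u = v_i$, the largest distance achievable with a partner above $u$ is $\delta_V(v_i, v_n) = n - i - 1$, attained by taking $v = v_n$. Then, by the definition of relevant out-vertex, $v_i \in \relevantout{V}{c}$ iff there exists some $v$ with $\delta_V(v_i, v) \ge c$, which by the previous step is equivalent to $n - i - 1 \ge c$, i.e.\ $i \le n - c - 1$. The relevant out-vertices are therefore exactly $v_1, \dots, v_{n-c-1}$, and counting these indices yields $\cardinality{\relevantout{V}{c}} = n - c - 1 = \cardinality{V} - (c+1)$.

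The claim for relevant in-vertices follows by the symmetric argument, reversing the roles of the two endpoints: fixing $v = v_j$, the distance $\delta_V(u, v_j)$ is maximized by taking the bottommost partner $u = v_1$, giving $\delta_V(v_1, v_j) = j - 2$. Thus $v_j \in \relevantin{V}{c}$ iff $j \ge c + 2$, so the relevant in-vertices are exactly $v_{c+2}, \dots, v_n$, again $n - c - 1$ of them, which matches $\cardinality{V} - (c+1)$.

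There is no genuine obstacle in this proof; it is essentially an off-by-one bookkeeping exercise. The only two points to be careful about are the precise convention of $\delta_V$ (that it counts vertices strictly between the endpoints, so that $\delta_V(v_i, v_j) = j - i - 1$ rather than $j - i$), and the implicit assumption $\cardinality{V} > c$ needed for the stated count to be nonnegative. When $\cardinality{V} \le c$, no pair can have $c$ vertices between its endpoints, so both sets are empty and the formula should be read as $\max\{0, \cardinality{V} - (c+1)\}$.
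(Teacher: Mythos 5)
Your proof is correct and takes essentially the same approach as the paper's: its one-sentence argument---that the last (respectively first) $c+1$ vertices in the order cannot be relevant, and all others are---is exactly your index calculation with the topmost (respectively bottommost) vertex as the witnessing partner. The only difference is that you also flag the degenerate case $\lvert V\rvert \le c+1$, where the formula must be read as $\max\{0,\lvert V\rvert-(c+1)\}$, a point the paper leaves implicit.
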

\begin{proof}
	The last (respectively first) $(c+1)$ vertices in $V$ cannot be relevant with respect to $c$.
\end{proof}

\begin{observation}\label{cor:countRelevantPairs}
	In a set of vertices $V$ with a total order, the number of relevant pairs in $V$ with respect to $c \in\mathbb{N}_0$ is:
	\begin{align*}
		\cardinality{\relevantpair{V}{c}} &= \cardinality{\relevantout{V}{c}}\frac{\cardinality{V}-c}{2} = \cardinality{\relevantin{V}{c}}\frac{\cardinality{V}-c}{2} =  {{\cardinality{V}-c}\choose{2}} \enspace . 
	\end{align*}
\end{observation}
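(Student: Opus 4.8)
The plan is to count $\cardinality{\relevantpair{V}{c}}$ directly, obtain the binomial coefficient, and then split it into the two displayed product forms using \Cref{cor:countRelevantVertices}. First I would write $n = \cardinality{V}$ and identify each vertex with its rank $1, \ldots, n$ in the total order. A pair with endpoints at ranks $i < j$ is relevant with respect to $c$ exactly when $\delta_V(u,v) = j - i - 1 \ge c$, that is, when $j - i \ge c + 1$.

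To count these, I would exhibit a bijection. Map each relevant pair at ranks $(i, j)$ with $j - i \ge c + 1$ to the pair $(i, j - c)$. Since $j - c > i$ and $j - c \le n - c$, the image is a $2$-subset of $\{1, \ldots, n - c\}$; conversely, sending $(i', j')$ with $i' < j'$ to $(i', j' + c)$ recovers a relevant pair, because then $j' + c - i' \ge c + 1$. Hence the relevant pairs are in bijection with the $2$-subsets of an $(n-c)$-element set, which gives $\cardinality{\relevantpair{V}{c}} = \binom{n - c}{2}$. A summation argument over the gap $g = j - i$, contributing $n - g$ pairs for each $g$ from $c + 1$ to $n - 1$, gives the same value $\sum_{g = c+1}^{n-1}(n - g) = \binom{n-c}{2}$ and could be used instead.

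For the two product forms, I would expand $\binom{n-c}{2} = (n - c - 1)\frac{n - c}{2}$ and substitute $\cardinality{\relevantout{V}{c}} = \cardinality{\relevantin{V}{c}} = n - (c + 1) = n - c - 1$ from \Cref{cor:countRelevantVertices}. This immediately yields $\binom{n-c}{2} = \cardinality{\relevantout{V}{c}}\frac{n - c}{2} = \cardinality{\relevantin{V}{c}}\frac{n - c}{2}$, establishing all three claimed equalities at once.

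The only real obstacle is bookkeeping with the off-by-one conventions: $\delta_V$ counts the vertices \emph{strictly} between the endpoints, so the relevance threshold is $j - i \ge c + 1$ rather than $j - i \ge c$, and the relevant out-/in-vertex counts from \Cref{cor:countRelevantVertices} are $n - (c + 1)$, not $n - c$. Once these are pinned down, both the bijection and the algebraic split are routine, and no genuinely hard step remains.
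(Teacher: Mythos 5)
Your proposal is correct, and it reaches all three equalities, but its main argument is organized differently from the paper's. The paper counts relevant pairs by iterating over the relevant out-vertices: the first relevant out-vertex participates in $\cardinality{V}-(c+1)$ relevant pairs, the next in $\cardinality{V}-(c+2)$, and so on down to $1$, so the total is a descending sum with $\cardinality{\relevantout{V}{c}}$ terms evaluated by Gauss' formula --- this makes the product form $\cardinality{\relevantout{V}{c}}\frac{\cardinality{V}-c}{2}$ appear directly (number of terms times average term), with the in-vertex version by symmetry. You instead establish the binomial coefficient $\binom{\cardinality{V}-c}{2}$ first, via a clean bijection between relevant pairs $(i,j)$ and $2$-subsets $(i,j-c)$ of an $(\cardinality{V}-c)$-element set, and then recover the two product forms algebraically by substituting $\cardinality{\relevantout{V}{c}}=\cardinality{\relevantin{V}{c}}=\cardinality{V}-c-1$ from \Cref{cor:countRelevantVertices}. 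Your alternative gap-summation argument ($\sum_{g=c+1}^{n-1}(n-g)$) is essentially the paper's descending sum, just indexed by gap length rather than by left endpoint. Both routes are elementary and equally rigorous; the bijection is arguably the crisper way to see the binomial coefficient, while the paper's enumeration has the small advantage that the factorization through $\cardinality{\relevantout{V}{c}}$ falls out of the proof structure rather than being verified by algebra afterwards. Your attention to the off-by-one conventions (relevance threshold $j-i\ge c+1$, out-vertex count $\cardinality{V}-(c+1)$) is exactly where the bookkeeping risk lies, and you handled it correctly.
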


\begin{proof}
	The first relevant out-vertex forms relevant pairs with exactly $\cardinality{V}-(c+1)$ other vertices. The second out-vertex with $\cardinality{V}-(c+2)$ vertices, and so on, until the very last relevant out-vertex only participates in one relevant pair. The count of relevant pairs is thus a sum over a descending sequence from $\cardinality{V}-(c+1)$ down to $1$. Since there are $\cardinality{\relevantout{V}{c}}=\cardinality{V} - (c+1)$ terms, the sum can be computed with Gauss' sum formula. A symmetric argument relates the relevant in-vertices to the relevant pairs.
\end{proof}

We continue with a simpler (weaker) Lemma related to \Cref{lem:relevantEdgeRecursion}:
\begin{lemma}\label{lem:upperBoundSumRelevantInducedEdgesCommunityDegeneracy}
For a graph $G=(V,E)$ oriented by a total order where the largest community has size $\gamma$ and any $c\geq2$, we have that: 
	\begin{displaymath}
		\sum_{e\in \relevantedge{G}{c}}\cardinality{\relevantedge{G[\neighbors{e}]}{c-2}} \le {{\gamma-c+2}\choose{2}}\cardinality{\relevantedge{G}{c}} \enspace.
	\end{displaymath}
\end{lemma}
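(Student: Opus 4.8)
The plan is to derive this directly from \Cref{cor:countRelevantPairs} together with the monotonicity of the binomial coefficient, so that the estimate reduces to a uniform per-edge bound. Throughout I write $\community{e}$ for the community of the edge $e$ on whose induced subgraph the algorithm recurses (the set written as the argument of $G[\cdot]$ in the statement), and the only hypothesis I will use is that $\cardinality{\community{e}} \le \gamma$ for every edge $e$.

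First I would bound the summand for a single fixed relevant edge $e \in \relevantedge{G}{c}$. Every relevant edge of the induced subgraph $G[\community{e}]$ is in particular a relevant \emph{pair} of the vertex set $\community{e}$ (an edge can only be relevant if its endpoints form a relevant pair), so
\[
	\cardinality{\relevantedge{G[\community{e}]}{c-2}} \le \cardinality{\relevantpair{\community{e}}{c-2}} .
\]
Applying \Cref{cor:countRelevantPairs} to the vertex set $\community{e}$ with parameter $c-2$ evaluates the right-hand side exactly, yielding $\cardinality{\relevantpair{\community{e}}{c-2}} = {{\cardinality{\community{e}}-c+2}\choose{2}}$.

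Second, I would make this per-edge bound uniform across the sum. Since $\cardinality{\community{e}} \le \gamma$ and the quantity ${{m}\choose{2}}$ is nondecreasing in the integer $m$ (interpreting it combinatorially as $0$ when $m \le 1$), we obtain ${{\cardinality{\community{e}}-c+2}\choose{2}} \le {{\gamma-c+2}\choose{2}}$ for every edge. Summing this single bound over all $e \in \relevantedge{G}{c}$ then gives
\[
	\sum_{e\in \relevantedge{G}{c}}\cardinality{\relevantedge{G[\community{e}]}{c-2}} \le {{\gamma-c+2}\choose{2}}\,\cardinality{\relevantedge{G}{c}} ,
\]
which is exactly the claim.

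There is no genuine obstacle: the entire lemma is a one-line consequence of \Cref{cor:countRelevantPairs} once the inclusion ``relevant edges of $G[\community{e}]$ are relevant pairs of $\community{e}$'' is noted. The only point that deserves a sentence is the degenerate regime of the binomial: when $\cardinality{\community{e}} - c + 2 < 2$ the community is too small to contain any relevant pair, so the count is $0$ and the inequality holds trivially, while for larger communities the quadratic $m(m-1)/2$ is strictly increasing, so replacing $\cardinality{\community{e}}$ by its upper bound $\gamma$ only enlarges the estimate. This looseness is precisely why \Cref{lem:upperBoundSumRelevantInducedEdgesCommunityDegeneracy} is the weaker companion to \Cref{lem:relevantEdgeRecursion}: it charges the full $\binom{\gamma-c+2}{2}$ to every edge rather than exploiting the finer cancellation that \Cref{lem:relevantEdgeRecursion} obtains by summing before bounding.
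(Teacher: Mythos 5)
Your proof is correct and follows essentially the same route as the paper's own: bound relevant edges of $G[\community{e}]$ by relevant pairs of $\community{e}$, evaluate the latter via \Cref{cor:countRelevantPairs}, and use $\cardinality{\community{e}}\le\gamma$ with monotonicity of the binomial coefficient before summing. Your additional remark about the degenerate case of the binomial is a harmless refinement, not a different approach.
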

\begin{proof}
Every relevant edge is also a relevant pair, so we bound the number of relevant pairs  $\cardinality{\relevantpair{\community{e}}{c-2}}$ for each edge $e$. 	For any edge $e\in E$, by \Cref{cor:countRelevantPairs} and $\cardinality{\community{e}}\leq \gamma$, 
	\begin{align*}
		\cardinality{\relevantedge{\community{e}}{c-2}} \le \cardinality{\relevantpair{\community{e}}{c-2}} = {{\cardinality{\community{e}}-(c-2)}\choose{2}}.
	\end{align*}
\end{proof}
When applied to a graph with at most $\gamma+2$ vertices, \Cref{lem:listing-cost} provides a factor $2$ better bound than \Cref{lem:upperBoundSumRelevantInducedEdgesCommunityDegeneracy}. This factor is crucial for the work bound as it will be amplified exponentially with $k$ by the recursion. Now, let us turn to the proof of our main result on counting relevant edges.
\begin{proof}[Proof Of \Cref{lem:relevantEdgeRecursion}]
	For each term of the sum, we rewrite $\cardinality{\relevantpair{\community{e}}{c-2}}$ using \Cref{cor:countRelevantPairs}. Note that $\cardinality{\community{e}} \le n-2 $ to upper bound the term:
	\begin{align}
		&\sum_{e\in\relevantedge{G}{c}}\cardinality{\relevantpair{\community{e}}{c-2}} \\ \le \enspace
		&\frac{n -c}{2}\sum_{e\in\relevantedge{G}{c}}\cardinality{\relevantout{\community{e}}{c-2}}, \\
		\le \enspace &\frac{n -c}{2}\sum_{u\in\relevantedgeout{G}{c}}\sum_{v\in\relevantedgein{G,u}{c}}\cardinality{\relevantout{\community{u,v}}{c-2}}.
	\end{align}	
	Now, we relate the set of relevant in-vertices that are part of a relevant edge to a set of vertices that are part of a relevant pair: 
	
	\begin{equation}\label{eq:relevantedgeinTorelevantin}
		\relevantedgein{G,u}{c} \supseteq \relevantin{\{u\} \cup\outneighbor{u}{G}}{c} = \relevantin{\outneighbor{u}{G}}{c-1}
	\end{equation}
	Moreover, for every vertex $v$ that is in $\relevantedgein{G,u}{c}$ and is not in $\relevantin{\{u\} \cup\outneighbors{u}}{c}$,  the set $\relevantpair{\community{u,v}}{c-2}$ is empty: Consider a vertex in $\relevantedgein{G,u}{c}$ that is not in $\relevantin{\{u\} \cup\outneighbor{u}{G}}{c}$. Then, there are less than $c$ vertices ordered between $u$ and $v$ that are inside $\outneighbor{u}{G}$ (using that $u$ is smaller than its out-neighbors). Hence, the set $\outneighbor{u}{G}[<v]$ has size less than $c$. If the set $\relevantpair{\community{u,v}}{c-2}$ is not empty, then there is a pair $(u', v')$ in it with distance $\delta_{C(u,v)}(u', v')\geq c-2$. But this implies the set $C(u,v)$ has size at least $c$. Because $C(u,v)$ is a subset of $\outneighbor{u}{G}[<v]$ ( all vertices in $\inneighbors{v}$ come before $v$ in the total order) this contradicts the fact that $\outneighbor{u}{G}[<v]$ has size less than $c$.
	
	This means that the terms in the inner sum corresponding to vertices $v$ that are in $\relevantedgein{G,u}{c}$ but not in $\relevantin{\{u\} \cup\outneighbors{u}}{c}$ can be dropped without changing the sum.	
	Applying this observation and $\community{u,v} \subseteq \outneighbors{u}[<v]$, we continue:
	\begin{align}
		&\frac{n -c}{2}\sum_{u\in\relevantedgeout{G}{c}}\sum_{v\in\relevantedgein{G,u}{c}}\cardinality{\relevantout{\community{u,v}}{c-2}} \\
		= \enspace &\frac{n-c}{2}\sum_{u\in\relevantedgeout{G}{c}}\sum_{v\in\relevantin{\outneighbors{u}}{c-1}}\cardinality{\relevantout{\community{u,v}}{c-2}} \\
	 	\leq \enspace &\frac{n-c}{2}\sum_{u\in\relevantedgeout{G}{c}}\sum_{v\in\relevantin{\outneighbors{u}}{c-1}}\cardinality{\relevantout{\outneighbors{u}[<v]}{c-2}} 
	\end{align}
	Continue using $\relevantout{\outneighbors{u}[< v]}{c-2} =	\relevantout{\outneighbors{u}[\le v]}{c-1} \enspace $:
	\begin{align}
		= \enspace &\frac{n-c}{2}\sum_{u\in\relevantedgeout{G}{c}}\sum_{v\in\relevantin{\outneighbors{u}}{c-1}}\cardinality{\relevantout{\outneighbors{u}[\le v]}{c-1}} \enspace .
	\end{align}
	Above, the inner sum now counts all pairs in $\outneighbors{u}$, that are relevant with respect to $c-1$. We rewrite the sum using \Cref{cor:countRelevantPairs} applied to the subgraph of $G$ induced by $\outneighbors{u}$ (which has at most $n-1$ vertices):
	\begin{align}
		&\le \frac{n-c}{2}\sum_{u\in\relevantedgeout{G}{c}}\cardinality{\relevantin{\outneighbors{u}}{c-1}}\frac{n-1-(c-1)}{2} \\
		&\le \left(\frac{n-c}{2}\right)^2\sum_{u\in\relevantedgeout{G}{c}}\cardinality{\relevantedgein{G,u}{c}}, \\
		&\le \left(\frac{n-c}{2}\right)^2\cardinality{\relevantedge{G}{c}}.
	\end{align}
	In the second step, we upper bounded $|\relevantin{\outneighbors{u}}{c-1}|$ using \Cref{eq:relevantedgeinTorelevantin}. The sum now counts again over the edges, relevant with respect to $c$, which concludes the proof.
\end{proof}

\section{Graph Orientation}\label{sec:graph-orientation}

Our clique listing algorithm requires a vertex ordering such that when we orient the graph, its communities have a small size $\gamma$. We explore three different approaches. The first two approaches rely on existing results on degeneracy orders. The third hybrid approach uses a different outer loop to achieve a better work/depth tradeoff for large cliques $k$.

We also present how to get algorithms with bounds that are parametric in the community degeneracy. These algorithms require an order \emph{on the edges} that reduces the size of the subgraphs used in the recursive calls over the edges' communities.

\subsection{Computing a Degeneracy Order}\label{sec:orientation}
We can use previous results on degeneracy order to compute such an order, with different work/depth tradeoffs.
A greedy algorithm computes the degeneracy order in linear time, but has linear depth.
\begin{lemma}
	Computing a degeneracy order takes $O(m)$ work and $O(n)$ depth~\cite{Matula1983}.
\end{lemma}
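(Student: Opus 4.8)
The plan is to recall the classical smallest-last (degeneracy) ordering algorithm of Matula and Beck~\cite{Matula1983} and verify its work and depth in our model. First I would maintain an array of buckets $D[0], \dots, D[n-1]$, where bucket $D[i]$ is a doubly linked list holding every vertex whose \emph{current} degree (in the not-yet-removed subgraph) equals $i$. Alongside the buckets I keep, for each vertex, its current degree together with a pointer to its position in its list, so that a vertex can be moved between buckets in $O(1)$ time. Initialising the degrees and populating the buckets costs $O(n+m)$ work. The order is then produced by $n$ removal steps: at each step I extract a vertex $v$ of minimum current degree from the lowest non-empty bucket, append it to the (reverse) degeneracy order, and for every still-present neighbour $u$ of $v$ I decrement its degree, moving $u$ from $D[\deg u]$ to $D[\deg u - 1]$.

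For the work bound, the first cost I would account for is the degree updates: each edge $\{u,v\}$ triggers at most one decrement, namely when the first of its two endpoints is removed, so the total number of bucket moves is at most $m$, each costing $O(1)$, for $O(m)$ work overall. The only remaining cost is locating the minimum non-empty bucket at each step. Here I would keep a single index $j$ pointing at a candidate for the current minimum degree. The index is decreased only when a neighbour's decremented degree drops below $j$, and since each decrement lowers a degree by exactly one, $j$ decreases by at most $1$ per decrement; hence the total decrease of $j$ over the whole execution is at most $m$. Because $j$ never leaves the range $[0, n-1]$ and its net change is bounded, the total increase of $j$ (the scanning done to skip empty buckets) is at most the total decrease plus $n$, i.e.\ $O(m+n)$. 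Summing, the work is $O(n+m)$, which is $O(m)$ under the standing assumption $m = \Omega(n)$.

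For the depth, I would observe that the $n$ removal steps form a dependency chain: which vertex has minimum degree after step $t$ depends on the removal and decrements performed during step $t$. Thus the steps are inherently sequential, and the critical path traverses all $n$ of them; since the per-step operations on the critical path (extracting the minimum and updating $j$) take $O(1)$, the depth is $O(n)$.

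The main obstacle is the amortised analysis of the minimum-degree pointer: a naive rescan of the bucket array at every step would give $O(n^2)$ work, so the crux is the charging argument that bounds the total movement of $j$ by the total number of degree decrements, which is in turn bounded by $m$.
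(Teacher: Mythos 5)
The paper offers no proof of this lemma at all---it is stated as a citation to Matula and Beck---so there is no internal argument to compare against; your reconstruction of the smallest-last bucket algorithm is the natural one, and your \emph{work} analysis is correct: initialization costs $O(n+m)$, each edge triggers at most one decrement and bucket move, and the total movement of the minimum-degree pointer $j$ is bounded by its total decrease plus $n$, giving $O(n+m)=O(m)$ work under the paper's standing assumption $m=\Omega(n)$.

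The depth argument, however, has a genuine gap. You assert that ``extracting the minimum and updating $j$'' takes $O(1)$ per step, but this contradicts your own amortized accounting: in a \emph{single} step the pointer $j$ may have to scan past $\Theta(n)$ empty buckets (take a graph consisting of a clique on $n-1$ vertices plus one isolated vertex; after the isolated vertex is removed, $j$ climbs from $0$ to $n-2$ in one step). This scanning is inherently sequential and lies on the critical path, so the bound you actually proved---total pointer movement at most $m+n$, coming from ``total decrease $\le$ number of decrements $\le m$''---yields depth $O(m+n)=O(m)$, not $O(n)$. The fix is a sharper charging argument: since exactly one vertex is removed per step, every remaining degree drops by at most $1$ per step, and because $j$ equals the minimum degree at the start of the step, the new minimum is at least $j-1$; hence $j$ decreases by at most $1$ \emph{per step}, its total decrease is at most $n$, its total increase is at most $2n$, and the scanning contributes only $O(n)$ to the critical path. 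Separately, for the depth to be $O(n)$ rather than $O(m)$, the $\deg(v)$ decrements within a step must be executed in parallel; updating the degree counters writes to distinct cells (fine on CREW), but moving several vertices concurrently into the same bucket's linked list conflicts at the list head under exclusive writes, so this needs a per-step compaction or similar workaround---a detail that the paper's one-line citation also glosses over, but which your argument inherits once those decrements are recognized as lying on the critical path.
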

Although no low depth algorithm is known to compute the degeneracy order, it can be approximated effectively in parallel. A  total order on the vertices such that the graph oriented by it has maximum out-degree $\beta s$ is a $\beta$-\emph{approximate degeneracy order}.
\begin{lemma}[Besta et al.\cite{Besta20HPCColoring}, Shi et al.~\cite{Shi2020}]
	Computing a $(2+\epsilon)$-approximate degeneracy order takes $O(m)$ work and $O(\log n \log_{1+\epsilon}n)$ depth.
\end{lemma}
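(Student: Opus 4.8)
The plan is to parallelize the greedy peeling that defines the degeneracy order. The sequential algorithm removes a single minimum-degree vertex at a time, which forces $\Omega(n)$ depth; instead I would remove a whole \emph{batch} of low-degree vertices in each round and place that batch before the surviving vertices in the output order. The analysis rests on one classical fact: any subgraph $H$ of an $s$-degenerate graph is itself $s$-degenerate, so it has at most $s\cdot\numnodes{H}$ edges and hence average degree $\bar d_H \le 2s$.

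Concretely, fix a constant $\epsilon'>0$ (to be rescaled at the end) and repeat the following round until the graph is empty. Compute the average degree $\bar d$ of the current subgraph and remove every vertex whose current degree is at most $(1+\epsilon')\bar d$, appending the removed batch to the order ahead of the vertices that remain. Two properties make this work. First, by Markov's inequality at most a $\frac{1}{1+\epsilon'}$ fraction of the vertices can have degree exceeding $(1+\epsilon')\bar d$, so each round deletes at least an $\frac{\epsilon'}{1+\epsilon'}$ fraction of the remaining vertices; the number of rounds is therefore $\bigo{\log_{1+\epsilon'} n}$. Second, a vertex removed in round $i$ has current degree at most $(1+\epsilon')\bar d \le 2(1+\epsilon')s$ by the degeneracy bound above. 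In the final order its out-neighbors form a subset of the neighbors that survive to round $i$ (all earlier-removed neighbors precede it), so its out-degree is at most $2(1+\epsilon')s$. Setting $\epsilon'=\epsilon/2$ yields maximum out-degree at most $(2+\epsilon)s$, i.e.\ a $(2+\epsilon)$-approximate degeneracy order, while $\log_{1+\epsilon/2} n = \Theta(\log_{1+\epsilon} n)$ for constant $\epsilon$.

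For the cost bounds, the number of rounds is $\bigo{\log_{1+\epsilon} n}$ and each round is implementable in $\bigo{\log n}$ depth: a parallel reduction computes the degree sum (hence $\bar d$), a prefix sum compacts the batch of low-degree vertices, and the incident edges are updated in parallel. This gives depth $\bigo{\log n \log_{1+\epsilon} n}$. To obtain $\bigo{m}$ work rather than $\bigo{m\log n}$, I would maintain the degrees \emph{incrementally}: when a vertex is deleted, scan its adjacency list once and decrement the degrees of its surviving neighbors, aggregating concurrent decrements to the same vertex by bucketing/semisorting the affected endpoints (which is CREW-safe). Each vertex's adjacency list is scanned exactly once over the whole execution, so the edge updates cost $\bigo{m}$ in total; the per-round scan of surviving vertices costs $\bigo{\sum_i n_i}=\bigo{n}$ by the geometric decay of the vertex counts $n_i$. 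Since the graph is connected, $m=\Omega(n)$ and the total work is $\bigo{m}$.

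The main obstacle is the interplay between the two roles of the degeneracy bound $\bar d \le 2s$: it must simultaneously guarantee (i) that a constant fraction of vertices is cheap enough to delete each round, bounding the round count, and (ii) that the deleted vertices have small out-degree, bounding the approximation factor. The subtle part of the cost analysis is achieving \emph{linear} work: a naive recomputation of all degrees per round would cost $\bigo{m\log_{1+\epsilon} n}$ because edges incident to high-degree but low-degeneracy vertices can survive many rounds, so the incremental degree maintenance and the charging of each adjacency-list scan to its unique deletion round are essential.
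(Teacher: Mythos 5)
The paper never proves this lemma---it is imported wholesale from Besta et al.~\cite{Besta20HPCColoring} and Shi et al.~\cite{Shi2020}---and your batch-peeling construction (repeatedly remove every vertex of degree at most $(1+\epsilon')$ times the current average degree, Markov's inequality for the $\bigo{\log_{1+\epsilon}n}$ round bound, and the fact that every subgraph of an $s$-degenerate graph has average degree at most $2s$ for the $(2+\epsilon)$ out-degree guarantee) is exactly the algorithm of those cited works, and it mirrors the edge-based analogue the paper itself uses for the approximate community-degeneracy order in \Cref{alg:apxCommDegOrder}. Your proof is correct, with the single caveat that the CREW-safe aggregation of concurrent degree decrements by semisorting is a randomized primitive, so the $\bigo{m}$ work bound holds in expectation rather than deterministically---consistent with the ${}^{\dag}$/${}^{\star}$ qualifiers the paper attaches to Shi et al.'s bounds in \Cref{tab:bounds}, a qualifier the lemma statement itself silently drops.
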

Note that the size of the largest community is at most the maximum out-degree minus 1. Hence, using these vertex orders to orient the graph, we obtain two of the results in \Cref{tab:bounds} from \Cref{thm:work}, namely the best work and the best depth results for degeneracy. 

\subsection{The Hybrid Approach}\label{sec:order-hybrid}

While computing the exact degeneracy order provides the best work-bound in terms of $m, k$, and $s$, its depth is linear in $n$. Using an approximate degeneracy order yields a clique counting algorithm that is work-inefficient by a term that is exponential in $k$. This inefficiency persists even for very small $\epsilon$ because the approximation leads to a factor $2+\epsilon$ larger maximum out-degree in the worst case. Hence the factor $\Theta( (2+\epsilon)^k)$ work-inefficiency arises. 

Therefore, we present a hybrid approach that is only work-inefficient by a factor $\frac{ns}{m}\leq s$ and has a depth that reduces the term linear in $n$ to linear in $s$. The idea is that an approximate degeneracy order already guarantees that the subgraphs induced by the out-vertices have $O(s)$ vertices. It then remains to solve these subgraphs with the clique listing algorithm that uses an exact degeneracy order. The algorithm goes as follows:

Compute a $(2.5)$-approximate degeneracy order in parallel and orient the graph according to it. Then, for each vertex $v$, orient the graph $G[\outneighbors{v}]$ using an (exact) degeneracy order, then run \Cref{alg:recursiveListing} on $G[\outneighbors{v}]$.

This hybrid approach has $\bigo{kns \left(\frac{s+3-k}{2}\right)^{k-2}}$ work and depth $\bigo{s+k\log s + \log^2 n}$. For each vertex $v$, the induced subgraph $G[\outneighbors{v}]$ has $O(s|\outneighbors{v}|)$ edges and $O(s)$ vertices. The cost of building these subgraphs and their communities is $\bigo{s \outneighbors{v}}$ per vertex $v$ and $\bigo{ms}$ overall. For $k\geq 4$, the preprocessing cost does not dominate. Finally, apply the bounds from \Cref{thm:work} on each subgraph for $\gamma=s-1$. 

\subsection{Parameterizing by Community Degeneracy} \label{sec:commdeg-order}
Next, we present a preprocessing approach that uses a total order of the edges in addition to an order on the vertices. This preprocessing naturally leads to algorithms whose runtime is parameterized by the community degeneracy of the graph.

To reduce the runtime for the recursive clique listing algorithm, we want to reduce the size of the subgraphs induced by the communities of an edge. We observe that it suffices to recurse only on a subset of the community, namely we can introduce an order on the edges of the graph. Then, we only consider the community of an edge $e$ in the subgraph induced by edges ordered higher than $e$. In other words, these are the endpoints of the triangles supported by edges whose other edges come after the supporting edge. 
The approach is summarized in \Cref{alg:kcliqueListingCommunity} and goes as follows:

In addition to computing a total order on the vertices, first, compute \emph{a total order $\preceq$ on the edges}. Then, traverse the edges in this order and for each edge $e=(u,v)$ construct the candidate set $C_{(V, E[e\preceq])} (e)$, that is, the community of the edge $e$ in the subgraph of $G$ consisting of the edges ordered after $e$ in the total order. 
For each candidate set, run the recursive clique counting \Cref{alg:recursiveListing} with parameter $c=k-2$. 

To reduce the work of this algorithm, the chosen edge order is crucial. We minimize the maximum size of the candidate sets with a greedy algorithm: repeatedly choose and remove an edge that support the smallest number of triangles in the remaining subgraph. By definition of the community degeneracy $\sigma$, the largest subgraph constructed in this way has $\sigma$ vertices. Note that the order of the vertices used for the recursive clique listing does not influence the runtime and can be arbitrary (say by vertex id).
\begin{algorithm}
	\caption{Listing all $k$-cliques in graph $G$}\label{alg:kcliqueListingCommunity}
	\begin{algorithmic}[1]
		\State \textsc{Dag}$ \gets$ orient $G$ by a total vertex order
		\State Compute a total order $\preceq$ of the edges of $G$
		\ParForAll{edges $e=\{u,v\}$}\label{line:loop:kclique}
			\State $V' \gets C_{(V, E[e\preceq])} (e)$
			\State\Call{RecursiveCount}{\textsc{Dag}$, V', k-2$}\label{line:recCall:kcliqueComm}
		\EndFor
	\end{algorithmic}
\end{algorithm}

Observe that the set of candidates $V'$ constructed for edge $e$ is a subset of the community $C_G(e)$ of $e$ in the input graph. 
Moreover, each $k$-clique is contained in at least one of the constructed subgraphs, namely the one which corresponds to the edge in the clique lowest in the community degeneracy order. The total order on the vertices ensures that each clique is reported exactly once.

\begin{theorem}\label{thm:comm-deg-algo-work}
Let $\gamma$ be the largest size of the set $V'$ takes in \Cref{alg:kcliqueListingCommunity}. Then, the cost of \Cref{alg:kcliqueListingCommunity}, excluding the cost of preprocessing, is $\bigo{k m \left(\frac{\gamma + 4 - k}{2}\right)^{k-2}}$ work and $O(k\log(\gamma+1))$ depth to list all $k$-cliques.
\end{theorem}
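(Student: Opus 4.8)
The plan is to reduce the whole analysis to the single-call recursive bound already developed for \Cref{thm:work}. Since the outer loop of \Cref{alg:kcliqueListingCommunity} runs over all $m$ edges in parallel, and each iteration invokes \Cref{alg:recursiveListing} exactly once on the candidate set $V'_e = C_{(V, E[e\preceq])}(e)$ with parameter $c=k-2$, the total work excluding preprocessing is
\begin{displaymath}
\sum_{e \in E} W(k-2, V'_e),
\end{displaymath}
where $W(\cdot,\cdot)$ is the recursive work of \Cref{alg:recursiveListing} from \Cref{sec:work-analysis}. The crucial structural observation is that $V'_e \subseteq \community{e}$, so $|V'_e|\le \gamma$, and because the recursion only ever intersects the current candidate set with a community (\ie $I' = I\cap \community{e'} \subseteq I$), \emph{every} candidate set arising below an edge $e$ stays of size at most $|V'_e|\le\gamma$. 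It therefore suffices to bound one call with $|I|\le\gamma$ and $c=k-2$ and then multiply by the $m$ outer iterations.

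First I would bound a single call. Applying \Cref{lem:listing-cost} with $c=k-2$, and the companion bounds on the intersection and edge-probing work (of the same order, by \Cref{sec:work-recursive-bounds}), the dominant leaf-listing term is
\begin{displaymath}
\left(\frac{|I|-k+4}{2}\right)^{2r(k-4)} \cardinality{\relevantedge{G[I]}{k-4}}\; b(k-2,\, |I|-2r(k-2)).
\end{displaymath}
I would bound $\cardinality{\relevantedge{G[I]}{k-4}} \le \cardinality{\relevantpair{I}{k-4}} = \binom{|I|-(k-4)}{2}$ via \Cref{cor:countRelevantPairs}, which contributes two additional factors of $\tfrac{\gamma-k+4}{2}$. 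Tracking the parity of $k$ through $b$ (which supplies one extra factor of $|I|=O(\gamma-k+4)$ exactly when $k$ is odd, precisely the case where $2r(k-4)$ is one smaller) makes the total exponent equal to $k-2$ in both the even and odd cases. This yields $W(k-2,I)=\bigo{k\left(\tfrac{\gamma+4-k}{2}\right)^{k-2}}$ per call; summing over the $m$ edges gives the claimed work bound.

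For the depth, since the outer loop is parallel, the depth equals that of one recursive call. By \Cref{obs:level-function} the recursion has $r(k-2)=\lfloor (k-2)/2\rfloor$ levels; each level performs edge probing in $\bigo{1}$ depth and an intersection via a prefix sum over at most $\gamma$ elements in $\bigo{\log(\gamma+1)}$ depth, for a total of $\bigo{k\log(\gamma+1)}$. Correctness (every $k$-clique reported exactly once) follows from the discussion preceding the theorem: $V'_e\subseteq\community{e}$, each $k$-clique falls into the subgraph of its community-degeneracy-lowest edge, and the vertex order deduplicates.

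The main obstacle I expect is the per-call work bound: verifying that the \emph{non-leaf} contributions (intersections and edge testing), not just the leaf listing of \Cref{lem:listing-cost}, are also $\bigo{k(\tfrac{\gamma+4-k}{2})^{k-2}}$, and carrying out the parity bookkeeping so that the exponent is exactly $k-2$ for both even and odd $k$. A secondary point to check is that truncating to $V'_e$ (rather than using the full community as in \Cref{thm:work}) only ever shrinks candidate sets, and so never invalidates the size-$\gamma$ invariant that drives \Cref{lem:listing-cost}; here the truncation is used only to make $\gamma$ small (down to the community degeneracy under the greedy edge order), not in the cost accounting itself.
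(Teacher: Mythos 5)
Your proposal is correct and takes essentially the same route as the paper's own (very terse) proof, which simply reruns the outer-loop analysis of \Cref{sec:work-outer-loop} with every candidate set bounded by $\cardinality{V'}\le\gamma$: your per-call bound via \Cref{lem:listing-cost}, the intersection/probing lemmas of \Cref{sec:work-recursive-bounds}, and \Cref{cor:countRelevantPairs} (the same binomial estimate that underlies \Cref{lem:upperBoundSumRelevantInducedEdgesCommunityDegeneracy}), summed over the $m$ outer edges with the same parity bookkeeping through $b$ and $r$, is exactly that computation. Your shrinking-candidate-set invariant and the $\bigo{k\log(\gamma+1)}$ depth argument likewise coincide with the paper's.
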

The analysis is basically the same as in \Cref{sec:work-outer-loop}, except that we have a better bound for the induced subgraphs given by the maximum size of $|V'|=\gamma$. %

We again propose three different variants on implementing the preprocessing in \Cref{alg:kcliqueListingCommunity}. The simplest way is the greedy approach outlined above, which leads to the best work.

The problem with this approach is that it has linear depth in $n$ to compute the edge order. Instead of computing the order of the edges greedily, we can compute an approximation of the order, summarized in \Cref{alg:apxCommDegOrder} and described in the following:

Compute the triangles $T$ in $G$. For each edge, keep track of the triangles that contain it (and vice versa). Moreover, for each edge $e$, keep a count of the remaining number of triangles $\cardinality{C(e)}$ that contain $e$. Then, repeat the following until the graph has no edges left: Select all the edges with $\cardinality{C(e)}\leq (3+\epsilon)T/m$ triangles. The constant $\epsilon$ must be positive. Remove those edges from the graph and add them to the total order, breaking ties arbitrarily. For each triangle that contained an edge in the removed set of edges, its edges $e$ need to update their count $\cardinality{C(e)}$. Then, update the overall count of remaining triangles $T$.

\begin{algorithm}
	\caption{Computing a $(3+\epsilon)$-approximate community degeneracy order of the graph $G$}\label{alg:apxCommDegOrder}
	\begin{algorithmic}[1]
		\State Compute the triangles $T$ in $G$ and link them with their edges.
		\State Count for each edge the triangles $\cardinality{C(e)}$ that contain it.
		\While{$E\neq \emptyset$}\label{line:loop:kclique}
			\State $E' \gets $ the edges with at most $\cardinality{C(e)}\leq (3+\epsilon)T/m$ triangles.
			\State Add the edges in $E'$ to the total order, break ties arbitrarily.
			\State Remove $E'$ from the graph
			\State Update each $C(e)$ to reflect the removal of $E'$.
			\State Update the number of triangles $T$.
		\EndWhile
	\end{algorithmic}
\end{algorithm}

Next, we make a few observations to help us bounding the work, depth, and quality of the approximate community degeneracy order.

\begin{observation}\label{lem:commdeg-triangles}
A graph with community degeneracy $\sigma$ has at most $\sigma m$ triangles.
\end{observation}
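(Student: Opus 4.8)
The plan is to mimic the classical charging argument showing that an $s$-degenerate graph has at most $sn$ edges, but now removing \emph{edges} instead of vertices and charging \emph{triangles} instead of edges. Recall that in an undirected graph the community $\community{e}$ of an edge $e=\{u,v\}$ is the set of common neighbors of $u$ and $v$, so $\cardinality{\community{e}}$ is exactly the number of triangles containing $e$. The high-level idea is that each triangle can be ``blamed'' on a single edge, and the community-degeneracy guarantee caps how many triangles any one edge is blamed for.

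First I would process the graph in a community degeneracy order of its edges: as long as the current subgraph $G'$ still contains an edge, the definition of $\sigma$-community degeneracy guarantees an edge $e$ with $\cardinality{\community[G']{e}}\le \sigma$; I remove this edge from $G'$ and repeat. Since $G$ has $m$ edges, this produces a sequence $e_1, e_2, \dotsc, e_m$ of all edges, where $e_i$ is removed from the subgraph $G_i$ obtained after deleting $e_1,\dotsc,e_{i-1}$, and by construction $\cardinality{\community[G_i]{e_i}} \le \sigma$ for every $i$.

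The key step is the charging. I would charge each triangle of $G$ to the one of its three edges that is removed first in the above order. When an edge $e_i$ is removed, the triangles it is charged for are exactly those triangles all of whose three edges are still present in $G_i$ and that contain $e_i$, i.e.\ the triangles counted by $\community[G_i]{e_i}$. Hence edge $e_i$ receives at most $\cardinality{\community[G_i]{e_i}} \le \sigma$ charges. Summing over all edges, the total number of triangles equals the total number of charges, which is at most $\sum_{i=1}^{m} \cardinality{\community[G_i]{e_i}} \le \sigma m$.

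The only point requiring care is that each triangle is charged precisely once, and to an edge whose \emph{current} community still contains it. Once the first edge of a triangle is deleted, the triangle is destroyed and can never be counted again by a later removal; and immediately before that deletion all three of its edges---in particular the one being removed---are present in $G_i$, so the triangle is indeed counted in $\community[G_i]{e_i}$. This establishes a bijection between triangles and their first-removed edges, which together with the per-edge bound of $\sigma$ yields the claimed bound of $\sigma m$. The analogy to the degeneracy/edge-count bound makes the structure routine; the main thing to verify carefully is this exactly-once accounting.
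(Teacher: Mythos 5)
Your proof is correct and follows essentially the same route as the paper: peel edges one at a time using the community-degeneracy guarantee, and charge each triangle to the first of its edges that gets removed, giving at most $\sigma$ charges per edge and hence at most $\sigma m$ triangles. The paper's own proof is just a terser statement of this same greedy removal and charging argument; your explicit verification of the exactly-once accounting fills in what the paper leaves implicit.
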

\begin{proof}
Remove the edges of the graph in a greedy manner, picking the edge with the smallest remaining number of triangles that contain it next. By definition of the community degeneracy, each subgraph encountered this way has an edge with $\sigma$ triangles that contain it. The process terminates after $m$ steps. Hence, the process removes at most $m\sigma$ triangles until no triangle is left.
\end{proof}

\begin{observation}\label{obs:apxCommDegOrder-iteration}
	\Cref{alg:apxCommDegOrder} terminates after $\bigo{ \log_{1+\epsilon} m}$ iterations
\end{observation}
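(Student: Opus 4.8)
The plan is to show that each iteration of the while loop removes a constant fraction of the currently remaining edges, so that the edge count decays geometrically and the loop empties the graph after $\bigo{\log_{1+\epsilon} m}$ rounds. Throughout, I treat the $m$ appearing in the threshold $(3+\epsilon)T/m$ as the number of edges remaining at the start of the current iteration and $T$ as the number of triangles in the current graph; the fact that the statement writes $\log_{1+\epsilon} m$ (an $m$, not an $n$) already signals that edges are the objects being peeled.

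First I would establish the averaging identity that drives everything: in any graph, summing the community sizes over all edges counts each remaining triangle exactly three times, once for each of its three edges, so $\sum_{e} \cardinality{C(e)} = 3T$. Consequently the mean value of $\cardinality{C(e)}$ over the $m$ current edges is $3T/m$, and the threshold $(3+\epsilon)T/m$ is precisely $(1+\epsilon/3)$ times this mean.

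Next I would apply a Markov-style counting argument to the \emph{surviving} edges, namely those with $\cardinality{C(e)} > (3+\epsilon)T/m$ that are not added to the order in this round. Since $3T = \sum_{e} \cardinality{C(e)}$ is at least the threshold times the number of survivors, the number of survivors is strictly fewer than $\frac{3T}{(3+\epsilon)T/m} = \frac{3m}{3+\epsilon}$. Hence at least $\frac{\epsilon}{3+\epsilon}\,m$ edges are removed and at most $\frac{3}{3+\epsilon}\,m$ remain. Iterating, after $j$ rounds at most $\left(\frac{3}{3+\epsilon}\right)^{j} m$ edges survive, which falls below $1$ — emptying the graph and terminating the loop — once $j \ge \log_{(3+\epsilon)/3} m$. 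Because $\epsilon$ is a positive constant, the change of logarithm base costs only a constant factor, so this is $\bigo{\log_{1+\epsilon} m}$ iterations.

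The degenerate case $T=0$ needs only a one-line check: the threshold is then $0$, every remaining edge has an empty community and is removed at once, so the loop ends immediately and the bound still holds. The only real subtlety — the step to get right — is the averaging argument: one must confirm that the threshold is a fixed multiplicative factor above the average community size, so that a constant fraction of the edges fails to clear it, and that this shrink factor $\frac{3}{3+\epsilon}$ stays bounded away from $1$ independently of the round (which it does). Everything else is routine geometric-series bookkeeping together with the base change of the logarithm.
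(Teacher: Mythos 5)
Your proof is correct and takes essentially the same route as the paper's: an averaging (Markov-style) argument showing that, since the threshold $(3+\epsilon)T/m$ is a constant factor above the mean community size $3T/m$, a constant fraction of the remaining edges is removed in each round, so the edge count decays geometrically and the loop terminates after $\bigo{\log_{1+\epsilon} m}$ iterations. You are in fact slightly more careful than the paper, pinning down the factor as $1+\epsilon/3$ rather than $1+\epsilon$ and checking the degenerate case $T=0$, but neither refinement changes the argument or the asymptotic bound.
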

\begin{proof}
From \Cref{lem:commdeg-triangles}, it follows that the average number of triangles per edge is at most $3\sigma T/m$ in each iteration (each triangle is counted once for each of its edges). From any set of numbers, at most a $\frac{1}{1+\epsilon}$ fraction of numbers are larger than $(1+\epsilon)$ times the average. Hence, each iteration reduces the number of edges by a factor at least $(1+\epsilon)$.
\end{proof}

\begin{lemma}\label{lem:apxCommDegOrder-WD}
	\Cref{alg:apxCommDegOrder} computes an order $\preceq$ on the edges such that for each edge $e$, the set $C_{(V, E[e\preceq])} (e)$ has size at most $(3+\epsilon)\sigma$. It takes $\bigo{ms + m\sigma}$ work and $\bigo{\log n \log_{1+\epsilon} n }$ depth. 
\end{lemma}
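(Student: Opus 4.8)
The plan is to verify the three claims—the quality of the produced order, the work, and the depth—separately, reusing \Cref{lem:commdeg-triangles} and \Cref{obs:apxCommDegOrder-iteration}.

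For the \textbf{quality} bound, fix an edge $e$ and let $G_i=(V,E_i)$ be the remaining graph at the start of the iteration of \Cref{alg:apxCommDegOrder} in which $e$ is removed, with $m_i=\cardinality{E_i}$ edges and $T_i$ triangles. Every edge ordered strictly before $e$ was removed in an earlier iteration, so $E[e\preceq]\subseteq E_i$; since the community of $e$ can only shrink when edges are deleted, $C_{(V, E[e\preceq])}(e)\subseteq C_{G_i}(e)$, and the size of the latter is exactly the maintained count $\cardinality{\community{e}}$ at that iteration. By the selection rule, $\cardinality{\community{e}}\le (3+\epsilon)T_i/m_i$. Because community degeneracy is monotone under taking subgraphs (every subgraph of $G_i$ is a subgraph of $G$), $G_i$ has community degeneracy at most $\sigma$, so \Cref{lem:commdeg-triangles} applied to $G_i$ yields $T_i\le \sigma m_i$. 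Combining these gives $\cardinality{C_{(V, E[e\preceq])}(e)}\le (3+\epsilon)\sigma$, as claimed.

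For the \textbf{work}, first compute and store the triangles; orienting by an (approximate) degeneracy order so the maximum out-degree is $\bigo{s}$, this takes $\bigo{m\tilde s}=\bigo{ms}$ work and produces $T\le \sigma m$ triangles to store, for $\bigo{ms+m\sigma}$ in total. Inside the while-loop, each iteration scans the remaining edges to apply the selection rule via a prefix sum; by \Cref{obs:apxCommDegOrder-iteration} the edge count drops by a constant factor each iteration, so these scans sum to a geometric series of total work $\bigo{m}$. The count maintenance is charged to triangles: a triangle is destroyed exactly once—in the first iteration in which one of its edges is removed—and at that moment we decrement the counts of its at most two surviving edges, which over the whole execution is $\bigo{T}=\bigo{m\sigma}$. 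Altogether the work is $\bigo{ms+m\sigma}$.

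For the \textbf{depth}, the triangle preprocessing contributes $\bigo{\log^2 n}$. Each of the $\bigo{\log_{1+\epsilon} m}$ iterations (\Cref{obs:apxCommDegOrder-iteration}) performs a constant number of filters and prefix sums, together with a semisort to aggregate the count decrements, each in $\bigo{\log n}$ depth. Thus the depth is $\bigo{\log^2 n+\log n\,\log_{1+\epsilon} m}=\bigo{\log n\,\log_{1+\epsilon} n}$, using $m=\bigo{n^2}$ so that $\log_{1+\epsilon} m=\bigo{\log_{1+\epsilon} n}$. The main obstacle I expect is the correct, conflict-free maintenance of the triangle counts under the CREW model: within a single batch several removed edges may share triangles, and a surviving edge may receive decrements from many destroyed triangles simultaneously, so the decrements must be gathered and applied by (semi)sorting to avoid concurrent writes while keeping the per-triangle charge $O(1)$. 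The quality argument's reliance on subgraph-monotonicity of community degeneracy (to invoke \Cref{lem:commdeg-triangles} on the shrinking graph $G_i$) is the other point needing care, but is otherwise routine.
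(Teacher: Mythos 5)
Your proof is correct and follows essentially the same route as the paper's: the quality bound via \Cref{lem:commdeg-triangles} applied to the remaining graph at each iteration, the iteration count via \Cref{obs:apxCommDegOrder-iteration}, and $\bigo{\log n}$ depth per iteration. You fill in details the paper leaves implicit (subgraph-monotonicity of community degeneracy, charging count updates to destroyed triangles, and conflict-free updates in the CREW model), but these are elaborations of the same argument rather than a different approach.
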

\begin{proof}
Listing the triangles takes $\bigo{ms}$ work and $\bigo{\log^2 n}$ depth~\cite{Chiba1985}. 
Each iteration of the loop takes work proportional to the number of removed edges and $\bigo{\log n}$ depth.
By \Cref{obs:apxCommDegOrder-iteration}, the algorithm terminates after $\bigo{\log_{1+\epsilon} n}$ iterations.

For the bound on the size of the communities, note that by \Cref{lem:commdeg-triangles}, $T/m \leq \sigma$. Hence, each of the removed edges satisfy $\cardinality{C(e)}\leq (3+\epsilon) T/ m \leq (3+\epsilon)\sigma$.
\end{proof}
The results for the community degeneracy in \Cref{tab:bounds} follow from \Cref{thm:comm-deg-algo-work} by using either the sequential greedy preprocessing, \Cref{lem:apxCommDegOrder-WD}, or by using a hybrid approach similar to \Cref{sec:order-hybrid}.

\section{Conclusion and Future Work}
We presented work improvements over previous algorithms on $k$-clique listing in $s$-degenerate and $\sigma$-community-degenerate graphs. The improvements are exponential in the clique size $k$, in particular when $k=\Theta(s)$.

Many interesting open questions remain. There remains a gap of $\bigo{ns/m}$ between the work of our best poly-logarithmic depth algorithm and our lowest work algorithm. Can we do better? Moreover, it would be interesting if the work can be further reduced when $k$ is not constant. It might be interesting to consider generalizations that extend the cliques by larger motifs such as triangles. Finally, are there other interesting classes of graphs where the clique problem is tractable? 

\bibliographystyle{plain}
\bibliography{../misc_notes/refs2,refs}

\appendix

\section{Work Bounds Continued}\label{sec:work-bounds-remaining}

For completeness, we include the remaining derivation of the work of our algorithm. We prove the contribution of the intersections and the edge existence probing separately in \Cref{sec:work-recursive-bounds} and combine them with the clique listing term from \Cref{sec:work-analysis} in \Cref{sec:work-outer-loop}.

\subsection{Recursive Work Cost}\label{sec:work-recursive-bounds}
We bound the work incurred by \Cref{alg:recursiveListing} for intersecting the edge neighborhoods. Note that because every community (and candidate set $I$) has size at most $\gamma$ by assumption and the sets are sorted in advance, the cost of any particular intersection is at most $d \gamma$ for some constant $d$.
\begin{lemma}
Let $Q(c, I)$ be the work of \Cref{alg:recursiveListing} incurred for intersection of sets, for a graph where the largest community has size $\gamma$. Then, we have for $c\ge 3$,%
$$
Q(c, I) \le d\gamma \enspace \left(\sum_{p=0}^{r(c-3)}\left(\frac{\cardinality{I}-c+2}{2}\right)^{2p}\right)\enspace \cardinality{\relevantedge{G[I]}{c-2}}. \label{eq:work:intersection}
$$
\end{lemma}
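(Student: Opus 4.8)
The plan is to set up a two-step recurrence for $Q$ that mirrors the structure of \Cref{alg:recursiveListing} and then solve it by induction on $c$, exactly as was done for the listing cost in \Cref{lem:listing-cost}. In a call with parameters $(c,I)$ the algorithm performs one set intersection — of cost at most $d\gamma$ by the remark preceding the lemma — for each relevant edge $e\in\relevantedge{G[I]}{c-2}$, and then recurses on $I\cap\community{e}$ with parameter $c-2$; the base cases $c=1$ and $c=2$ do no intersections. Hence for $c\ge 3$,
\begin{displaymath}
Q(c,I)\le \sum_{e\in\relevantedge{G[I]}{c-2}}\Bigl( d\gamma + Q\bigl(c-2,\, I\cap\community{e}\bigr)\Bigr),\qquad Q(1,I)=Q(2,I)=0 .
\end{displaymath}
I would prove the closed form by induction in steps of two, verifying the two base cases $c=3$ and $c=4$ directly: in both the recursive term vanishes, leaving $Q(c,I)\le d\gamma\,\cardinality{\relevantedge{G[I]}{c-2}}$, which matches the formula because $r(c-3)=0$ collapses the geometric sum to its single $p=0$ term.

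For the inductive step, assume the bound for $c$ and derive it for $c+2$. Expanding the recurrence splits $Q(c+2,I)$ into a flat part $d\gamma\,\cardinality{\relevantedge{G[I]}{c}}$ (which will become the $p=0$ term of the target) and a recursive part $\sum_{e\in\relevantedge{G[I]}{c}}Q(c,\,I\cap\community{e})$, to which I apply the inductive hypothesis. Two moves then make the sum telescope. First, I replace each base $\tfrac{\cardinality{I\cap\community{e}}-c+2}{2}$ by $\tfrac{\cardinality{I}-c}{2}$: this is valid because neither endpoint of $e$ lies in its own community (a vertex of $\community{e}$ is strictly ordered between the endpoints), so $\cardinality{I\cap\community{e}}\le\cardinality{I}-2$ and hence $\tfrac{\cardinality{I\cap\community{e}}-c+2}{2}\le\tfrac{\cardinality{I}-c}{2}$. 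Second, after pulling out this now-common geometric factor, I bound the remaining $\sum_{e\in\relevantedge{G[I]}{c}}\cardinality{\relevantedge{G[I\cap\community{e}]}{c-2}}$ by $\bigl(\tfrac{\cardinality{I}-c}{2}\bigr)^2\cardinality{\relevantedge{G[I]}{c}}$ using \Cref{lem:relevantEdgeRecursion} applied to the subgraph $G[I]$ (noting $I\cap\community{e}=C_{G[I]}(e)$, so the lemma applies verbatim). This contributes an extra factor $\bigl(\tfrac{\cardinality{I}-c}{2}\bigr)^2$ that raises every exponent by two; reindexing and using $r(c-3)+1=r(c-1)=r((c+2)-3)$ from \Cref{obs:level-function} shifts the sum to run from $p=1$ to $r((c+2)-3)$. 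Adding back the flat $p=0$ term reconstitutes $\sum_{p=0}^{r((c+2)-3)}\bigl(\tfrac{\cardinality{I}-c}{2}\bigr)^{2p}$, which is precisely the claimed form with $c$ replaced by $c+2$.

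The main obstacle is the index bookkeeping of the geometric sum rather than any hard estimate: the step hinges on the identity $r(c-3)+1=r(c-1)$ to merge the shifted sum with the flat term, and on the uniform base bound $\cardinality{I\cap\community{e}}\le\cardinality{I}-2$ to collapse all the per-edge bases to a single one before invoking \Cref{lem:relevantEdgeRecursion}. The parity handling must also be stated carefully — running separate odd and even induction chains anchored at $c=3$ and $c=4$ — since the lemma is claimed only for $c\ge 3$ yet the recurrence from $c+2$ reaches down into the base cases $c=1,2$. Everything else (the per-intersection cost $d\gamma$ and the edge-count bound) is inherited directly from the preamble and \Cref{lem:relevantEdgeRecursion}.
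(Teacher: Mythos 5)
Your proposal is correct and follows essentially the same route as the paper's proof: the same two-step recurrence with $d\gamma$ per relevant edge, induction anchored at $c=3$ and $c=4$, the uniform bound $\cardinality{I\cap \community{e}}\le\cardinality{I}-2$ to collapse the per-edge bases, an application of \Cref{lem:relevantEdgeRecursion} to the subgraph $G[I]$, and the reindexing via \Cref{obs:level-function} that absorbs the flat term as the $p=0$ summand. Your added justifications (endpoints are excluded from their own community; $I\cap\community{e}=C_{G[I]}(e)$) are details the paper leaves implicit, not a different argument.
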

\begin{proof}
The proof is again by induction on $c$. For the base case $c=2$ there are trivially no intersections required. For the cases $c=3$ and $c=4$, the algorithm only needs to intersect the community of each edge relevant with respect to $1$ (resp., $2$) once with $I$ and the bound holds. %
Assume that the bound holds for all $c'\leq c$. We show that it also holds in the case $c+2$. For that, we start with a recursive expression for the work and apply the induction hypothesis. %
	\begin{align}
		&\enspace Q(c, I) \\
		\le &\sum_{e\in\relevantedge{G[I]}{c-2}} d\gamma + Q(c-2, I \cap \community{e}), \\
		\le &\sum_{e\in\relevantedge{G[I]}{c-2}} d \gamma \enspace + \notag \\
		&d \gamma\enspace \left(\sum_{p=0}^{r(c-5)}\left(\frac{\cardinality{I \cap \community{e}}-c+4}{2}\right)^{2p}\right) 
		\enspace \cardinality{\relevantedge{G[I \cap \community{e}]}{c-4}}
	\end{align}
	We rearrange the terms and use that $\cardinality{I \cap \community{e}}\le \cardinality{I}-2$:
	\begin{align}
		\le \enspace &d \gamma \enspace \cardinality{\relevantedge{G[I]}{c-2}} \enspace + \notag\\
		& d \gamma \enspace \left(\sum_{p=0}^{r(c-5)}\left(\frac{\cardinality{I}-c+2}{2}\right)^{2p}\right) \sum_{e\in\relevantedge{G[I]}{c-2}}
		\cardinality{\relevantedge{G[I \cap \community{e}]}{c-4}} \enspace .
	\end{align}
	Now, we apply \Cref{lem:relevantEdgeRecursion}:
	\begin{align}
		\le \enspace &d \gamma \enspace \cardinality{\relevantedge{G[I]}{c-2}} \enspace + \notag \\
		& d \gamma \enspace \left(\sum_{p=0}^{r(c-5)}\left(\frac{\cardinality{I}-c+2}{2}\right)^{2p}\right)
		\enspace  \left(\frac{\cardinality{I}-c+2}{2}\right)^2 \cardinality{\relevantedge{G[I]}{c-2}} \\
		\le \enspace &d \gamma \enspace \cardinality{\relevantedge{G[I]}{c-2}} \enspace + \notag \\
		& d \gamma \enspace \left(\sum_{p=1}^{r(c-5)+1}\left(\frac{\cardinality{I}-c+2}{2}\right)^{2p}\right)
		\cdot \cardinality{\relevantedge{G[I]}{c-2}} \enspace.
	\end{align}
	Now, we use \Cref{obs:level-function} and use pull the first term into the sum as well. This results in the desired form, which concludes the proof:
	\begin{align}
		Q(c,I) &\le d \gamma\cdot\left(\sum_{p=0}^{r(c-3)}\left(\frac{\cardinality{I}-c+2}{2}\right)^{2p}\right)
		\enspace \cardinality{\relevantedge{G[I]}{c-2}} \enspace.
	\end{align}
\end{proof}

Next, we bound the cost incurred for probing for edge existence.
By building a perfect hash table (or a adjacency matrix for each subgraph induced by $C(e)$ for every edges $e$), the cost to probe for the existence of an edge is at most some constant $f$. 
\begin{lemma}
Let $S(c, I)$ be the work of \Cref{alg:recursiveListing} incurred for probing for edge existence. Then, we have for $c\ge2$,
$$S(c, I) \le f\cdot\left( \sum_{p=0}^{r(c-2)}\left( \frac{\cardinality{I}-c+2}{2} \right)^{2p} \right)\cardinality{\relevantpair{I}{c-2}} \label{eq:work:testing}$$
\end{lemma}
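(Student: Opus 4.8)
The plan is to prove the bound by induction on $c$, closely mirroring the already-established bound for the intersection work $Q(c,I)$, since the edge-probing cost obeys an almost identical recurrence. There are only two structural differences to keep in mind: probing happens at \emph{every} relevant pair of the current level (not only at the relevant edges that trigger a recursive call), which is why the trailing factor is $\cardinality{\relevantpair{I}{c-2}}$ rather than $\cardinality{\relevantedge{G[I]}{c-2}}$; and consequently the geometric-type sum runs up to $r(c-2)$ instead of $r(c-3)$.

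First I would set up the recurrence. Each probe costs at most $f$, and \Cref{alg:recursiveListing} probes each pair relevant with respect to $c-2$ exactly once before recursing with parameter $c-2$ on $I\cap\community{e}$ for every relevant edge $e$. Hence
$$S(c, I) \le f\cdot\cardinality{\relevantpair{I}{c-2}} + \sum_{e\in\relevantedge{G[I]}{c-2}} S(c-2, I\cap\community{e}) \enspace .$$
The base cases are $c=2$ and $c=3$. In both, the recursive calls reach the base cases $c=1$ or $c=2$, where no further probing is triggered beyond the current level, so $S(c,I)=f\cdot\cardinality{\relevantpair{I}{c-2}}$. This matches the claimed bound, because $r(0)=r(1)=0$ collapses the sum to its single $p=0$ term.

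For the inductive step ($c\ge 4$), assume the bound for $c-2$ and apply it to each summand $S(c-2, I\cap\community{e})$; this produces powers of $\frac{\cardinality{I\cap\community{e}}-c+4}{2}$, summed up to $r(c-4)$, times $\cardinality{\relevantpair{I\cap\community{e}}{c-4}}$. Using $\cardinality{I\cap\community{e}}\le\cardinality{I}-2$ replaces each base by $\frac{\cardinality{I}-c+2}{2}$, which is now independent of $e$ and can be pulled out of the sum over $e$. What remains inside is $\sum_{e\in\relevantedge{G[I]}{c-2}}\cardinality{\relevantpair{I\cap\community{e}}{c-4}}$, which is exactly the quantity controlled by \Cref{lem:relevantEdgeRecursion} applied to the subgraph $G[I]$ with its parameter set to $c-2$ (observe that $I\cap\community{e}$ is the community of $e$ within $G[I]$). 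That lemma contributes the factor $\left(\frac{\cardinality{I}-c+2}{2}\right)^2$ together with $\cardinality{\relevantedge{G[I]}{c-2}}$. Multiplying this squared factor into the pulled-out sum shifts its index by one, and \Cref{obs:level-function} ($r(c-4)+1=r(c-2)$) raises the upper limit to $r(c-2)$.

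Finally, I would absorb the leading term $f\cdot\cardinality{\relevantpair{I}{c-2}}$ as the $p=0$ term of the geometric sum; since every relevant edge is a relevant pair, bounding $\cardinality{\relevantedge{G[I]}{c-2}}\le\cardinality{\relevantpair{I}{c-2}}$ gives every term the common trailing factor $\cardinality{\relevantpair{I}{c-2}}$, yielding precisely the claimed closed form. The main thing to get right is the index bookkeeping: the correct parameter matching in \Cref{lem:relevantEdgeRecursion} (edges relevant with respect to $c-2$, inner pairs with respect to $c-4$) and the two invocations of \Cref{obs:level-function}. The geometric reindexing and the subset inequality between relevant edges and relevant pairs are the only other moving parts, and both are routine; I do not expect any genuine obstacle beyond careful tracking of the shifted indices.
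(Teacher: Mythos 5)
Your proposal is correct and follows essentially the same route as the paper's own proof: the identical recurrence and base cases, the induction with $\cardinality{I\cap\community{e}}\le\cardinality{I}-2$ to pull the geometric sum out, \Cref{lem:relevantEdgeRecursion} applied to $G[I]$, the bound of relevant edges by relevant pairs, and \Cref{obs:level-function} to reindex and absorb the leading term as the $p=0$ summand. No gaps; only the phrasing of the base case $c=2$ (which is itself a base case of the algorithm, not one that recurses) differs cosmetically.
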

\begin{proof}
	The proof is by induction. %
	If $c=2$ or $c=3$, then the algorithm only needs to test all pairs relevant with respect to $0$ or $1$ in $I$ once and the bound holds.
Assume that the bound holds for all $c'\leq c$. We show that it also holds in the case $c+2$. We write:
	\begin{align}
		\enspace &S(c, I) \\
		\le  \enspace &f \enspace \cardinality{\relevantpair{I}{c-2}}+ \sum_{e\in\relevantedge{G[I]}{c-2}}S(c-2, I  \cap \community{e}) \\
		\le \enspace &f \enspace \cardinality{\relevantpair{I}{c-2}}\enspace + \notag \\
		& \sum_{e\in\relevantedge{G[I]}{c-2}}\left( \sum_{p=0}^{r(c-4)}\left(
		\frac{\cardinality{I  \cap \community{e}}-c+4}{2}\right)^{2p} \right)\cardinality{\relevantpair{I  \cap \community{e}}{c-4}}
	\end{align}
	Use that $\cardinality{I  \cap \community{e}}\le \cardinality{I}-2$, to pull two factors in front of the sum:
	\begin{align}
		\le \enspace &f \enspace \cardinality{\relevantpair{I}{c-2}}  \enspace  + \\
		& f\enspace\left(\sum_{p=0}^{r(c-4)}\left(\frac{\cardinality{I}-c+2}{2}\right)^{2p}\right)\enspace
		\sum_{e\in\relevantedge{G[I]}{c-2}}\cardinality{\relevantpair{I  \cap \community{e}}{c-4}} \enspace .
	\end{align}
	We use \Cref{lem:relevantEdgeRecursion} on the graph $G[I]$ and use that every relevant edge is also a relevant pair:
	\begin{align}
		\le \enspace &f\enspace \cardinality{\relevantpair{I}{c-2}} \enspace + \notag \\
		&f\enspace\left(\sum_{p=0}^{r(c-4)}\left(\frac{\cardinality{I}-c+2}{2}\right)^{2p}\right)\enspace
		\left(\frac{\cardinality{I}-c+2}{2}\right)^2\cardinality{\relevantpair{I}{c-2}} \\
		= \enspace &f \enspace \cardinality{\relevantpair{I}{c-2}} \enspace +  
		f\enspace\left(\sum_{p=1}^{r(c-4)+1}\left(\frac{\cardinality{I}-c+2}{2}\right)^{2p}\right)\enspace
		\cardinality{\relevantpair{I}{c-2}}
	\end{align}
	Now, we apply \cref{obs:level-function} and pull the first term into the sum. We arrive at the desired form, concluding the proof:
	\begin{align}
		S(c,I) \le f\enspace\left(\sum_{p=0}^{r(c-2)}\left(\frac{\cardinality{I}-c+2}{2}\right)^{2p}\right)\enspace
		\cardinality{\relevantpair{I}{c-2}}
	\end{align}
\end{proof}

\subsection{Overall Work Bound}\label{sec:work-outer-loop}
So far, we derived the work of \Cref{alg:recursiveListing}. Now, we will apply the outer most loop in \Cref{alg:kcliqueListing}: The loop itself simply iterates over all relevant edges and calls \Cref{alg:recursiveListing} with $c=k-2$. Note, that we can simplify the call to the base case function $b$ with \Cref{obs:level-function} and the fact that $k-2$ and $k$ are either both even or both odd.
\begin{align}
	W &\le \sum_{e\in\relevantedge{G}{k-2}}W(k-2, \community{e}), \\
	&\le f \sum_{e\in\relevantedge{G}{k-2}} \left( \sum_{p=0}^{r(k-4)}\left( \frac{\cardinality{\community{e}}-k+4}{2} \right)^{2p} \right)\cardinality{\relevantpair{\community{e}}{k-4}} \notag \\
	&+  d\gamma \sum_{e\in\relevantedge{G}{k-2}}\left(\sum_{p=0}^{r(k-5)}\left(\frac{\cardinality{\community{e}}-k+4}{2}\right)^{2p}\right)
	 \cardinality{\relevantedge{G[\community{e}]}{k-4}}  \notag \\
	&+ \sum_{e\in\relevantedge{G}{k-2}} \left(\frac{\cardinality{\community{e}}-k+4}{2}\right)^{2r(k-4)}\cardinality{\relevantedge{G[\community{e}]}{k-4}} \notag \\
	& \enspace \enspace \enspace \enspace \cdot b(k, \cardinality{\community{e}}-2r(k-2)) \enspace . 
\end{align}
Use that $\cardinality{\community{e}} \le \gamma$ by assumption:
\begin{align}
	W &\le f \enspace \left( \sum_{p=0}^{r(k-4)}\left( \frac{\gamma-k+4}{2} \right)^{2p} \right) \sum_{e\in\relevantedge{G}{k-2}}\cardinality{\relevantpair{\community{e}}{k-4}} \notag \\
	&+ d\gamma \enspace \left(\sum_{p=0}^{r(k-5)}\left(\frac{\gamma-k+4}{2}\right)^{2p}\right)
	 \sum_{e\in\relevantedge{G}{k-2}} \cardinality{\relevantedge{G[\community{e}]}{k-4}} \notag \\
	&+ \left(\frac{\gamma-k+4}{2}\right)^{2r(k-4)} b(k, \gamma-2r(k-2)) \sum_{e\in\relevantedge{G}{k-2}}\cardinality{\relevantedge{G[\community{e}]}{k-4}} .
\end{align}
Now, apply \Cref{lem:upperBoundSumRelevantInducedEdgesCommunityDegeneracy} to bound the number of relevant edges and pairs in the subgraphs induced by all relevant edges in $G$:
\begin{align}
	W &\le 2f \cdot\left( \sum_{p=1}^{r(k-4)+1}\left( \frac{\gamma-k+4}{2} \right)^{2p} \right)\enspace \cardinality{\relevantedge{G}{k-2}} \notag \\
	&+ 2d\gamma\cdot\left(\sum_{p=1}^{r(k-5)+1}\left(\frac{\gamma-k+4}{2}\right)^{2p}\right)
	\enspace \cardinality{\relevantedge{G}{k-2}} \notag \\
	&+ 2\left(\frac{\gamma-k+4}{2}\right)^{2r(k-4)+2} \enspace b(k, \gamma-2r(k)) \enspace \cardinality{\relevantedge{G}{k-2}}. 
\end{align}
\begin{align}
	&= 2f \cdot\left( \sum_{p=1}^{r(k-2)}\left( \frac{\gamma-k+4}{2} \right)^{2p} \right) \enspace \cardinality{\relevantedge{G}{k-2}} \notag \\
	&+ 2d\gamma\cdot\left(\sum_{p=1}^{r(k-3)}\left(\frac{\gamma-k+4}{2}\right)^{2p}\right)
	\enspace \cardinality{\relevantedge{G}{k-2}} \notag \\
	&+ 2\left(\frac{\gamma-k+4}{2}\right)^{2r(k-2)} b(k, \gamma-2r(k-2)) \enspace \cardinality{\relevantedge{G}{k-2}} \notag \\
	&\le 2f \cdot r(k-2)\left( \frac{\gamma-k+4}{2} \right)^{2r(k-2)} \cardinality{\relevantedge{G}{k-2}} \notag \\
	&+ 2\gamma\cdot r(k-2)\left(\frac{\gamma-k+4}{2}\right)^{2r(k-3)} \notag
	\enspace \cardinality{\relevantedge{G}{k-2}}\\
	&+ 2\left(\frac{\gamma-k+4}{2}\right)^{2r(k-2)} \cdot b(k, \gamma-2r(k-2)) \enspace \cardinality{\relevantedge{G}{k-2}}.
\end{align}
To further simplify the above expression, we make use of a simple observation:
\begin{observation}\label{obs:simple-arithmetic-bound}
	For $k\ge 1, k\le \gamma+2$, it holds that
	\begin{displaymath}
		2\gamma \le 2k(\gamma+4-k).
	\end{displaymath}
\end{observation}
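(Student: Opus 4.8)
The plan is to divide both sides by $2$ and prove the equivalent inequality $k(\gamma+4-k)\ge \gamma$, then rearrange it so that the two hypotheses $k\ge 1$ and $k\le\gamma+2$ can each be applied cleanly. First I would expand and collect the terms carrying a factor of $\gamma$, writing $k(\gamma+4-k)-\gamma = (k-1)\gamma + 4k - k^2$. This isolates the term $(k-1)\gamma$, whose sign is controlled entirely by the hypothesis $k\ge 1$.

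Next I would bring in the second hypothesis $k\le\gamma+2$, equivalently $\gamma\ge k-2$. Since $k-1\ge 0$, multiplying the bound $\gamma\ge k-2$ by $k-1$ preserves the inequality and gives $(k-1)\gamma \ge (k-1)(k-2)$. Substituting this lower bound yields $(k-1)\gamma + 4k - k^2 \ge (k-1)(k-2) + 4k - k^2$, and the right-hand side collapses to the pure polynomial $k+2$, which is manifestly nonnegative for $k\ge 1$. Hence $k(\gamma+4-k)-\gamma \ge k+2 \ge 0$, and multiplying through by $2$ recovers the claimed inequality $2\gamma \le 2k(\gamma+4-k)$.

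The one step to handle with care, and the only place an error could creep in, is the direction of the inequality when substituting $\gamma\ge k-2$: this substitution is legitimate precisely because its coefficient $k-1$ is nonnegative, which is exactly what $k\ge 1$ provides (for $k=1$ the term vanishes and equality holds there). Both hypotheses are therefore genuinely needed, and neither can be dropped. As an alternative I could instead observe that $k\mapsto k(\gamma+4-k)$ is a concave quadratic in $k$, so its minimum over the interval $[1,\gamma+2]$ is attained at an endpoint; evaluating at $k=1$ gives $\gamma+3$ and at $k=\gamma+2$ gives $2\gamma+4$, both at least $\gamma$. I expect the direct algebraic route to be the shorter and more self-contained of the two, so I would present that one.
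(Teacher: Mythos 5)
Your proof is correct. Note that the paper itself supplies no proof of this observation at all---it is stated bare as ``a simple observation'' and immediately used in the overall work derivation of \Cref{sec:work-outer-loop}---so there is no authorial argument to compare against; your write-up fills a gap the authors left to the reader. The algebraic route you chose is sound: the decomposition $k(\gamma+4-k)-\gamma=(k-1)\gamma+4k-k^2$, followed by substituting $\gamma\ge k-2$ with the nonnegative coefficient $k-1$, collapses cleanly to $k+2>0$, and in fact proves the slightly stronger statement that $2k(\gamma+4-k)-2\gamma\ge 2(k+2)$. Your use of the hypotheses is exactly right ($k\ge 1$ to justify the sign of the multiplier, $k\le\gamma+2$ for the substitution), and your remark that both are needed is accurate, as counterexamples exist when either is dropped. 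The alternative concavity argument is also valid, with the minor caveat that the endpoint evaluation $f(\gamma+2)=2\gamma+4\ge\gamma$ silently uses $\gamma\ge -4$, which is harmless here since $\gamma$ is a community size and hence nonnegative; the direct algebraic route avoids even that, so your choice to present it is the right one.
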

Now recall the definition of the level function:
\begin{displaymath}
	r(k)  := \left\lfloor\frac{k}{2}\right\rfloor.
\end{displaymath}
Due to the floor operator, it evaluates to the same value for $k$ even and $k'=k+1$ odd, but the base case incurs different costs for $k$ even and $k'$ odd. Let's take a look at the listing part of the cost:
\begin{align}
	W(k\ \text{even}) &\le 2\left(\frac{\gamma-k+4}{2}\right)^{k-2} \cdot k, \\
	W(k'\ \text{odd}) &\le 2\left(\frac{\gamma-k'+4}{2}\right)^{k'-3} \cdot k'(\gamma - (k'-3)), \\
	&\le 2\left(\frac{\gamma-k'+4}{2}\right)^{k'-3} \cdot k'2\frac{\gamma - k' + 4}{2}, \\
	&\le 4\left(\frac{\gamma-k'+4}{2}\right)^{k'-2} \cdot k'
\end{align}
Together with \Cref{obs:simple-arithmetic-bound}, we see that if $k$ is even, then the test and listing terms are dominant. In the odd case, the listing term is the dominant one, closely followed by the testing term. This concludes the wo

\section{Experimental Evaluation}

In this section, we present an empirical evaluation of \Cref{alg:kcliqueListing}. For all experiments, we use the (exact) degeneracy order from \Cref{sec:orientation}. We compare our implementation with the state-of-the-art implementations by Danisch et al.~\cite{Danisch2018} and Shi et al.~\cite{Shi2020}.


\subsection{Data Set}
We assembled a set of graphs to test the different implementations. These graphs represent different fields of applications and exhibit different structural properties~\cite{Besta2020}. \Cref{tab:graphStats} provides a brief overview of the graphs. The graphs `Tech-as-skitter', `Ca-DBLP-2012' and `Orkut' have already been used previously for benchmarking by Danisch et al.~\cite{Danisch2018} and Shi et al.~\cite{Shi2020}.  Orkut is a relatively large online social network, Ca-DBLP-2012 is the collaboration network of the DBLP 2012, and Tech-as-Skitter is an internet topology graph, generated in 2005 by daily traceroutes~\cite{Snapnets}.

In addition, we added some non-standard graphs to the set: `Gearbox', `Chebyshev4', `Bio-SC-HT' and `Jester2'. Gearbox is a structural network from an aircraft flap actuator. Chebyshev4 is a structural network as well. It is derived from a 4th order Chebyshev scheme for numerical integration. Jester2 is a joke-rating network, and Bio-SC-HT is a network derived from functional gene associations~\cite{NetworkRepository}. 


\begin{table}
	\resizebox{\linewidth}{!}{
	\begin{tabular}{lrrrrrrr}
		\toprule
		Graph & $\cardinality{V}$ & $\cardinality{E}$ & $\cardinality{T}$ & $s$ & $\tfrac{\cardinality{E}}{\cardinality{V}}$ & $\tfrac{\cardinality{T}}{\cardinality{V}}$ & $\tfrac{\cardinality{T}}{\cardinality{E}}$ \\ \midrule
		Orkut & $3.1$M & $117.2$M & $627.6$M & $253$ & $38.1$ & $204.6$ & $5.4$ \\
		Ca-DBLP-2012 & $317$K & $1$M & $2.2$M & $113$ & $3.3$ & $7$ & $2.1$ \\
		Tech-As-Skitter & $1.7$M & $11.1$M & $28.8$M & $111$ & $6.5$ & $17$ & $2.6$ \\
		Gearbox & $153.7$K & $4.5$M & $4.6$M & $44$ & $29$ & $30$ & $1$ \\
		Chebyshev4 & $68$K & $1.9$M & $28.9$M & $68$ & $28.9$ &  $424.2$ & $14.7$ \\
		Jester2 & $50.1$K & $1.7$M & $35.6$M & $128$ & $34.1$ &  $703.3$ & $20.6$ \\
		Bio-SC-HT & $2084$ & $63$K & $1.4$M & $100$ & $30.2$ & $670.7$ & $22.2$ \\
	\end{tabular}
}
\vspace{1em}
	\caption{Overview over the selected graphs. $s$ is the degeneracy; $T$ denotes the triangles in the graph. 
	All graphs are publicly available~\cite{Snapnets,NetworkRepository} and have been symmetrized.}\label{tab:graphStats}
\end{table}

\subsection{Experimental Setup}
The bulk of our experiments was performed on the Piz Daint Computing platform by CSCS. 
We used the XC40 Multicore Compute nodes, equipped with two Intel\textsuperscript{\textregistered} Xeon\textsuperscript{\textregistered} E5-2695 v4 \@ 2.10GHz, 18 cores/36 threads, 45MB of Intel\textsuperscript{\textregistered} Smart Cache and each with 64/128 GB RAM~\cite{intelProducts,cscsWeb}.

The code was compiled in both instances with GCC 8.3 with the optimization flags \texttt{-O3} and \texttt{-march=native}. Piz Daint supports AVX2. GCC 8.3 supports OpenMP 4.5~\cite{gnuWeb}, which was chosen for parallel programming.

All measurements were repeated at least 10 times, except for the runs for Orkut and Jester with $k\geq 7$, which were repeated $5$ times due to the long running times. The reported times are arithmetic averages over all measurements. 

\subsection{Results}

\begin{figure}
   	\centering
   	\includegraphics[width=8.2cm]{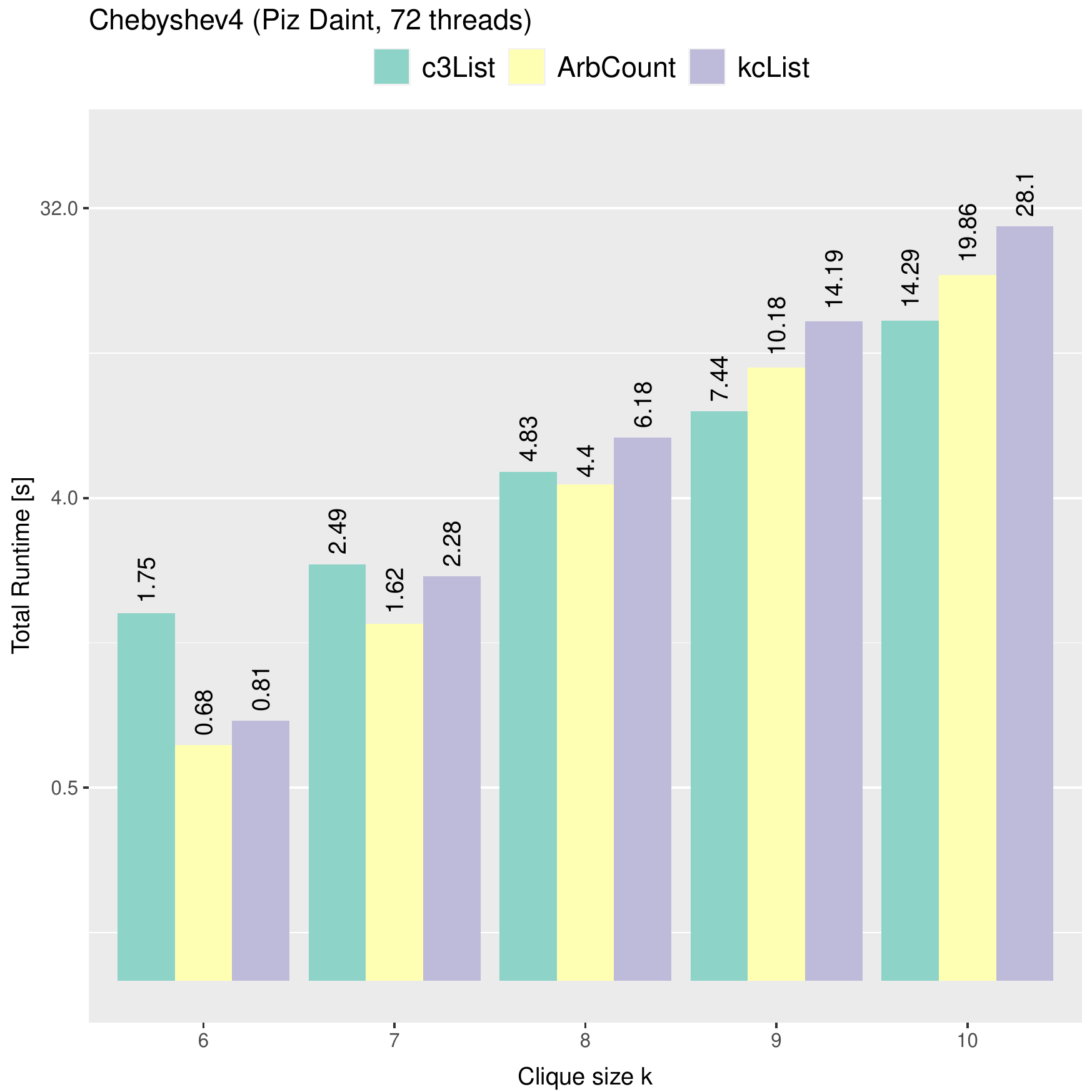}
   	\vspace{1em}
    \caption{Results on the Chebyshev4 graph. }\label{fig:results-chevishev}
\end{figure}

See \Cref{fig:results-chevishev}, \Cref{fig:results1} and \Cref{fig:results2} for the comparison of our algorithm (c3List) with Shi et al. (ArbCount~\cite{Shi2020}) and Danisch et al. (kcList~\cite{Danisch2018}). We report the runtime for $72$ threads and vary the clique size $k$ from $6$ to $10$.

The empirical standard deviation of the runtimes is less than $5.2 \%$ for clique sizes $k\geq 8$ for all algorithms and all graphs except the Gearbox graphs (which can be solved quickly). For Gearbox graphs with $k=10$, the standard deviation is less than $6.4\%$. Other results need to be interpreted with care, as the standard deviation is larger than $10\%$ for at least one graph and algorithm.

For small clique sizes ($k<8$), it depends on the graph which algorithm is fastest. These results also have a higher variance.

For larger clique sizes ($k\ge 8$), Shi et al.~\cite{Shi2020} generally outperform Danisch et al.~\cite{Danisch2018}. Our community-centric algorithm is faster than both on a majority of the instances starting at $k=8$. The advantage of our implementation over the others generally grows with the clique size. This trend might be because of the non-constant base to the exponent in the work bounds of our algorithm.

Overall, for cliques of size $k\geq 9$, our algorithm is slower in one instance (Orkut) and outperforms the others by $3.4-37.9\%$. Our algorithm appears to be relatively better when there are few triangles per vertex. For those graphs (Tech-As-Skitter, Gearbox, CA-DBLP-2012), our algorithm is $16.5-37.9\%$ and $13.8-33.7\%$ faster than the next fastest implementation for cliques of size $9$ and $10$, respectively. The good performance on those graphs could be because our pruning technique is particularly effective when there are fewer triangles.

 

\begin{figure*}[p]
	\begin{subfigure}[t]{0.49\textwidth}
   	\centering
   	\includegraphics[width=8.2cm]{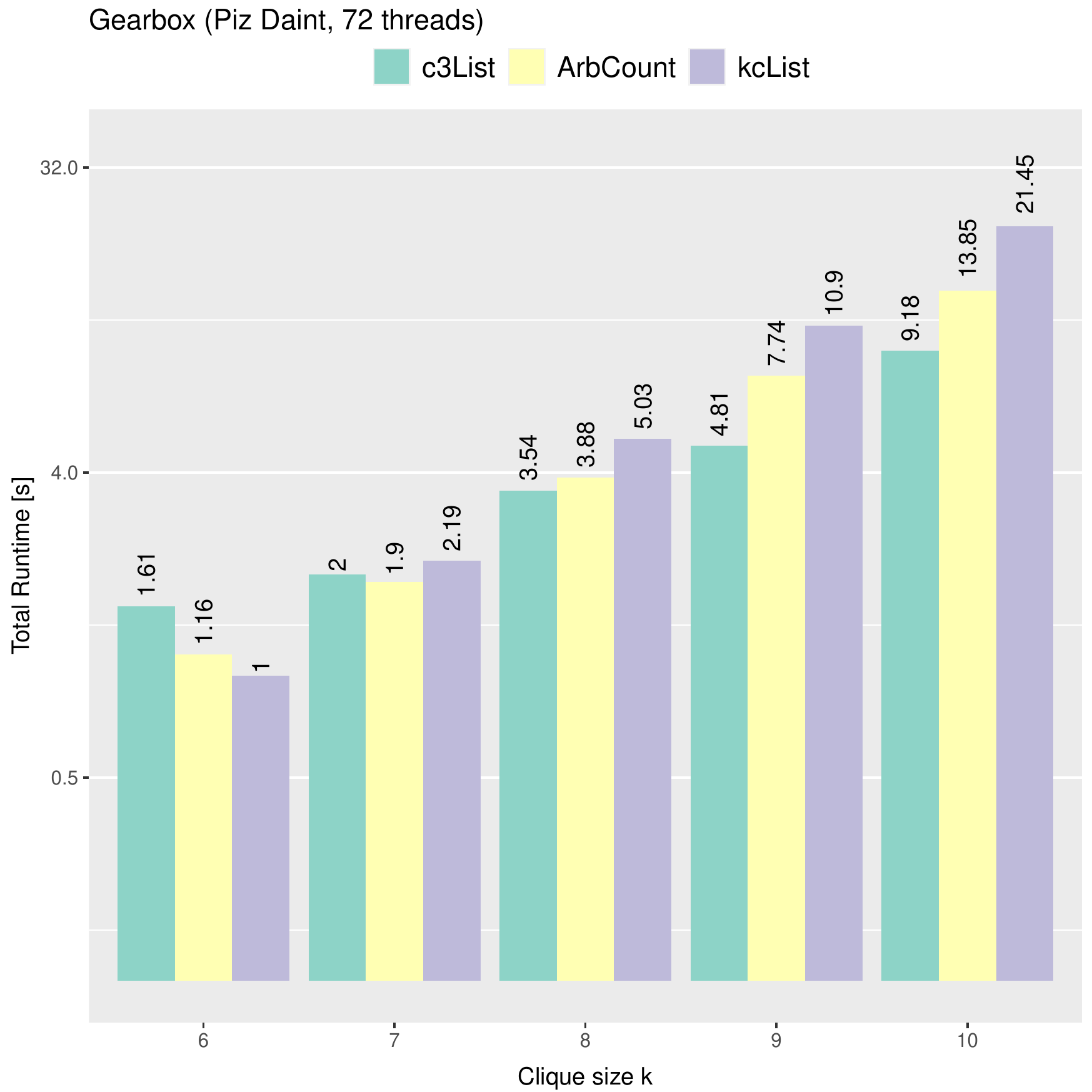}
   	\vspace{1em}
   	\caption{}\label{fig:results-gearbox}
    \vspace{3em}
  \end{subfigure}
    	\begin{subfigure}[t]{0.49\textwidth}
   	\centering
   	\includegraphics[width=8.2cm]{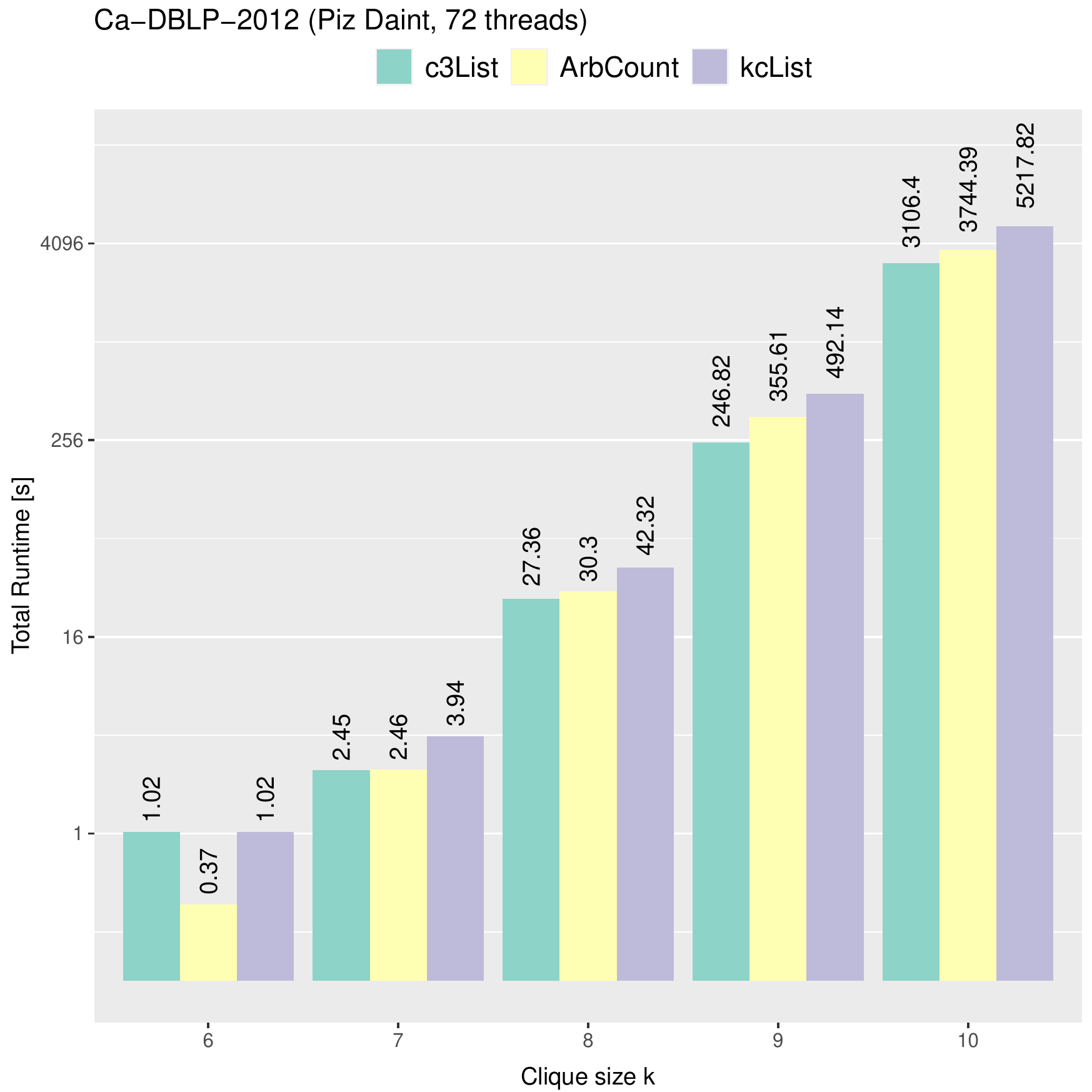}
   	\vspace{1em}
   	\caption{}\label{fig:results-dblp}
   \vspace{2em}
  \end{subfigure}

 \begin{subfigure}[t]{0.49\textwidth}
   	\centering
   	\includegraphics[width=8.2cm]{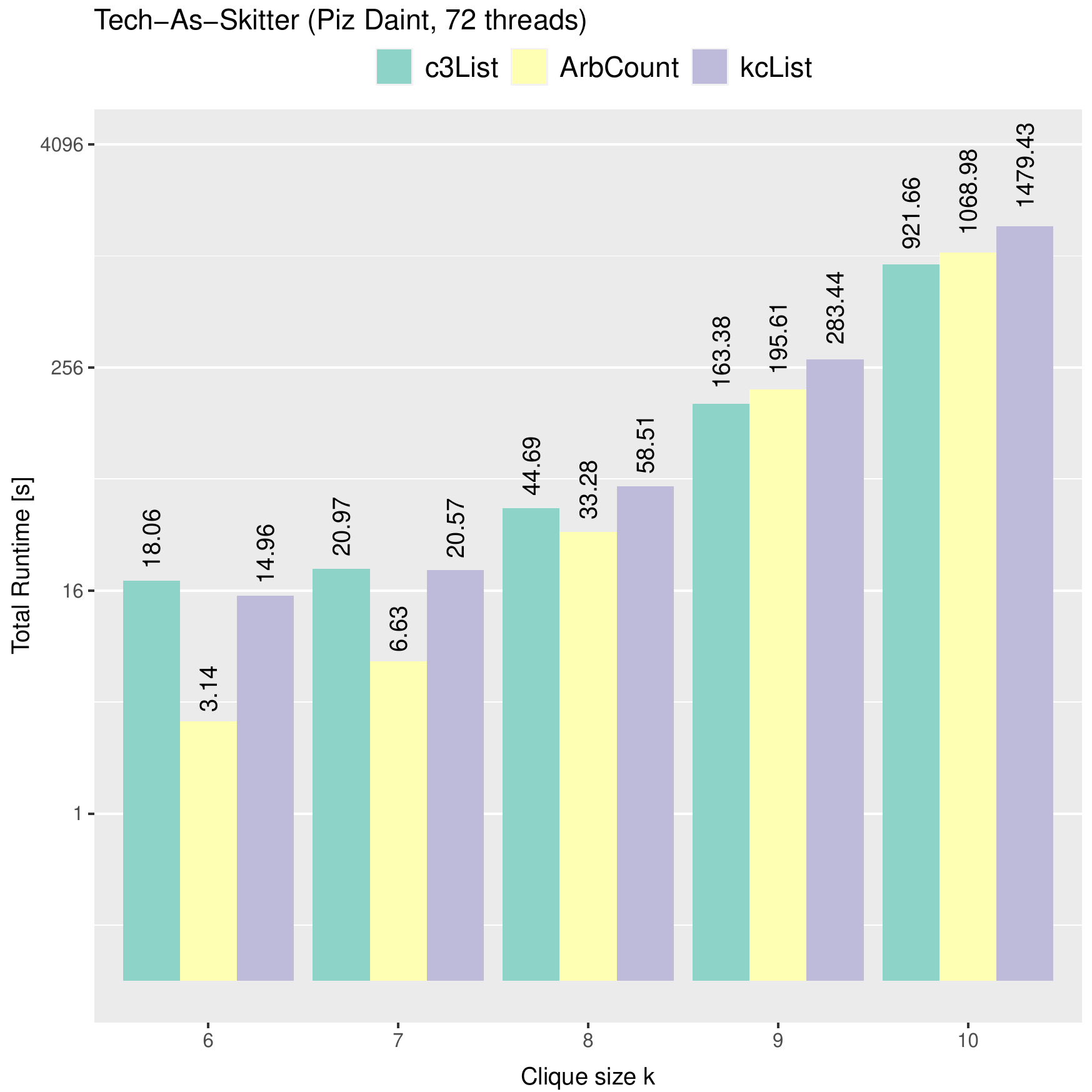}
   	\vspace{1em}
    \caption{}\label{fig:results-skitter}
    \vspace{2em}
  \end{subfigure}
	\begin{subfigure}[t]{0.49\textwidth}
   	\centering
   	\includegraphics[width=8.2cm]{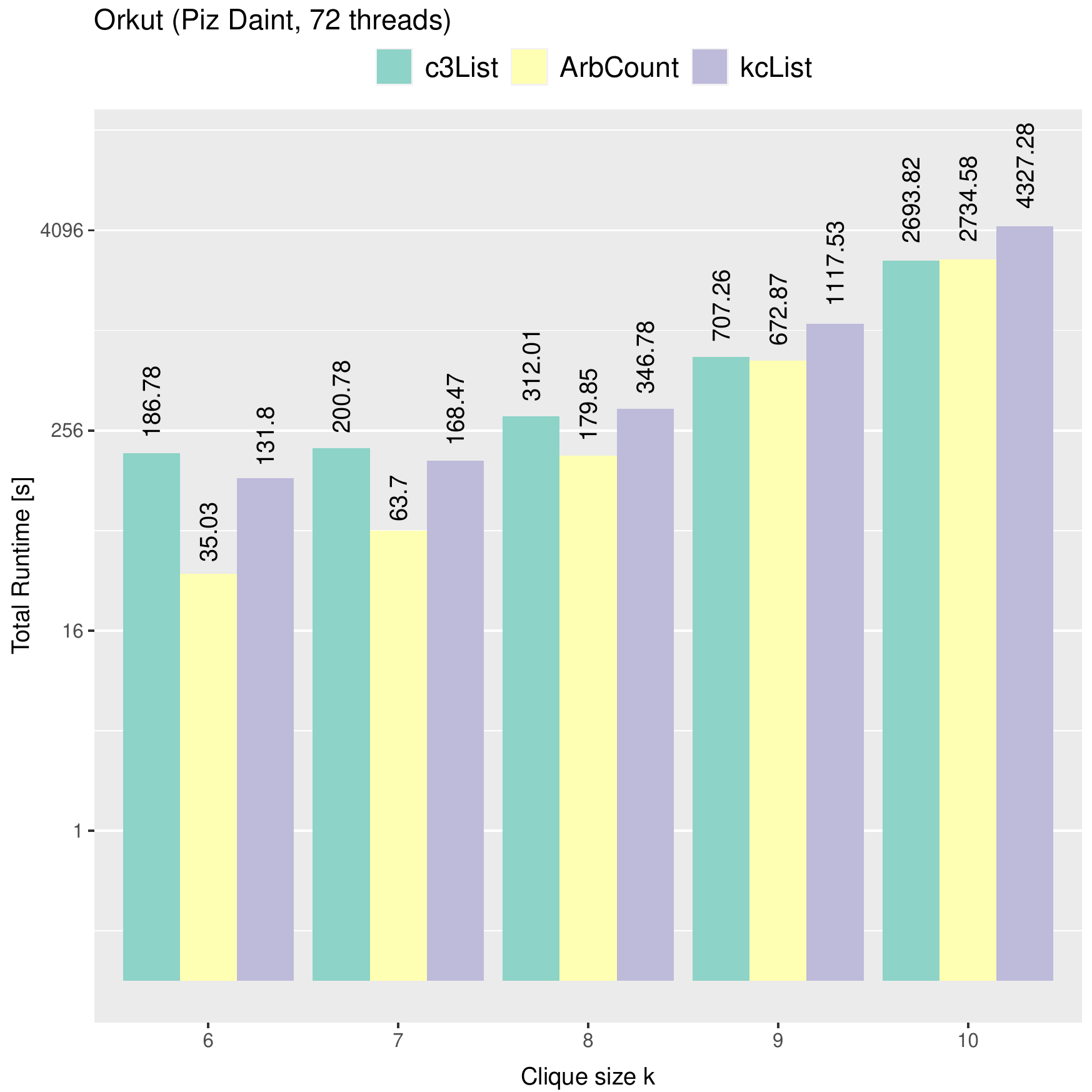}
   	\vspace{1em}
    \caption{}\label{fig:results-orkut}
    \vspace{3em}
  \end{subfigure} 
  \caption{Runtime Results for 72 threads for varying clique sizes. Our algorithm is c3List, ArbCount is by Shi et al.~\cite{Shi2020}, and kcList is by Danisch et al.~\cite{Danisch2018}}\label{fig:results1}
\end{figure*}

\begin{figure*}[p]
	\begin{subfigure}[t]{0.49\textwidth}
   	\centering
   	\includegraphics[width=8.2cm]{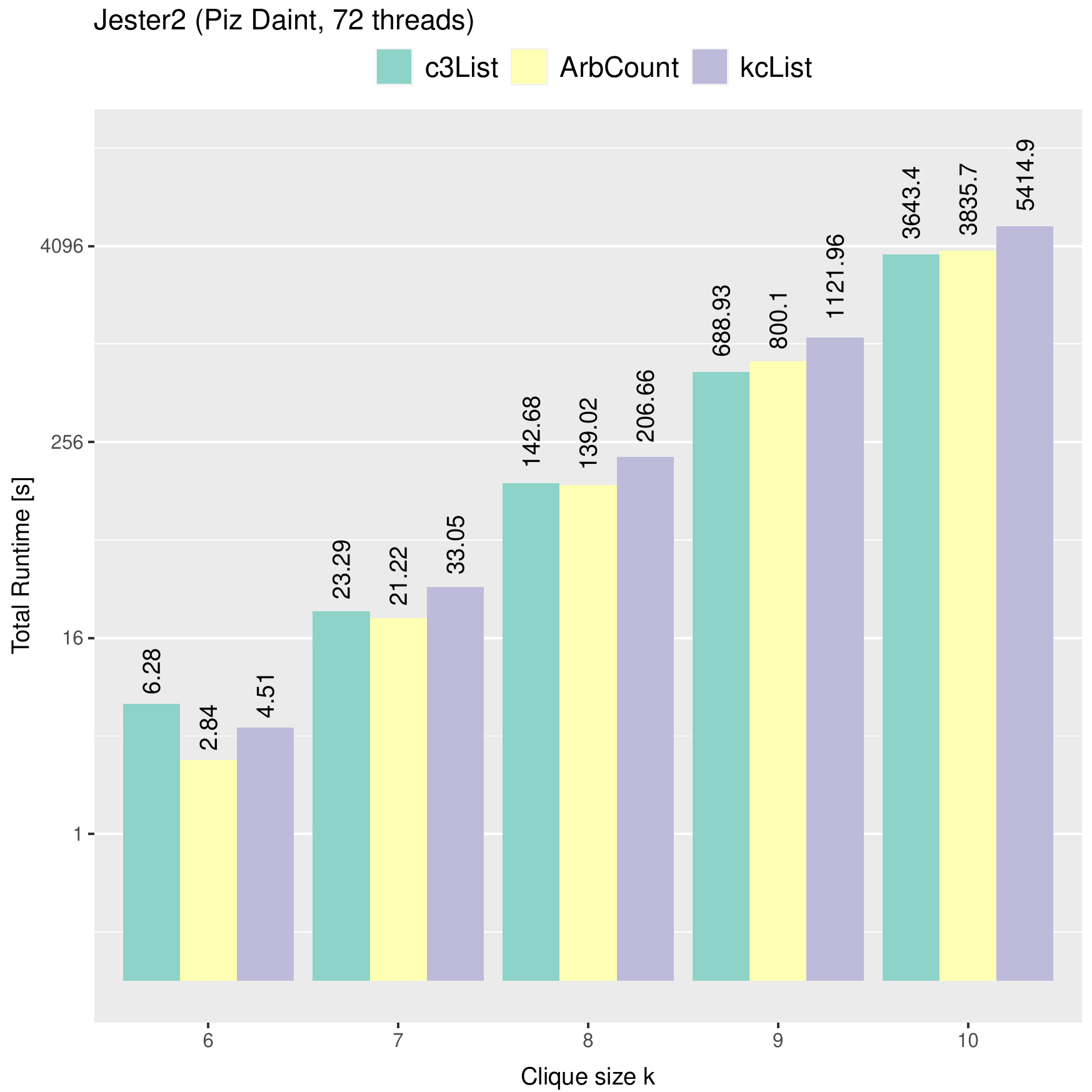}
   	\vspace{1em}
   \caption{}
    \vspace{2em}
  \end{subfigure} 
 \begin{subfigure}[t]{0.49\textwidth}
   	\centering
   	\includegraphics[width=8.2cm]{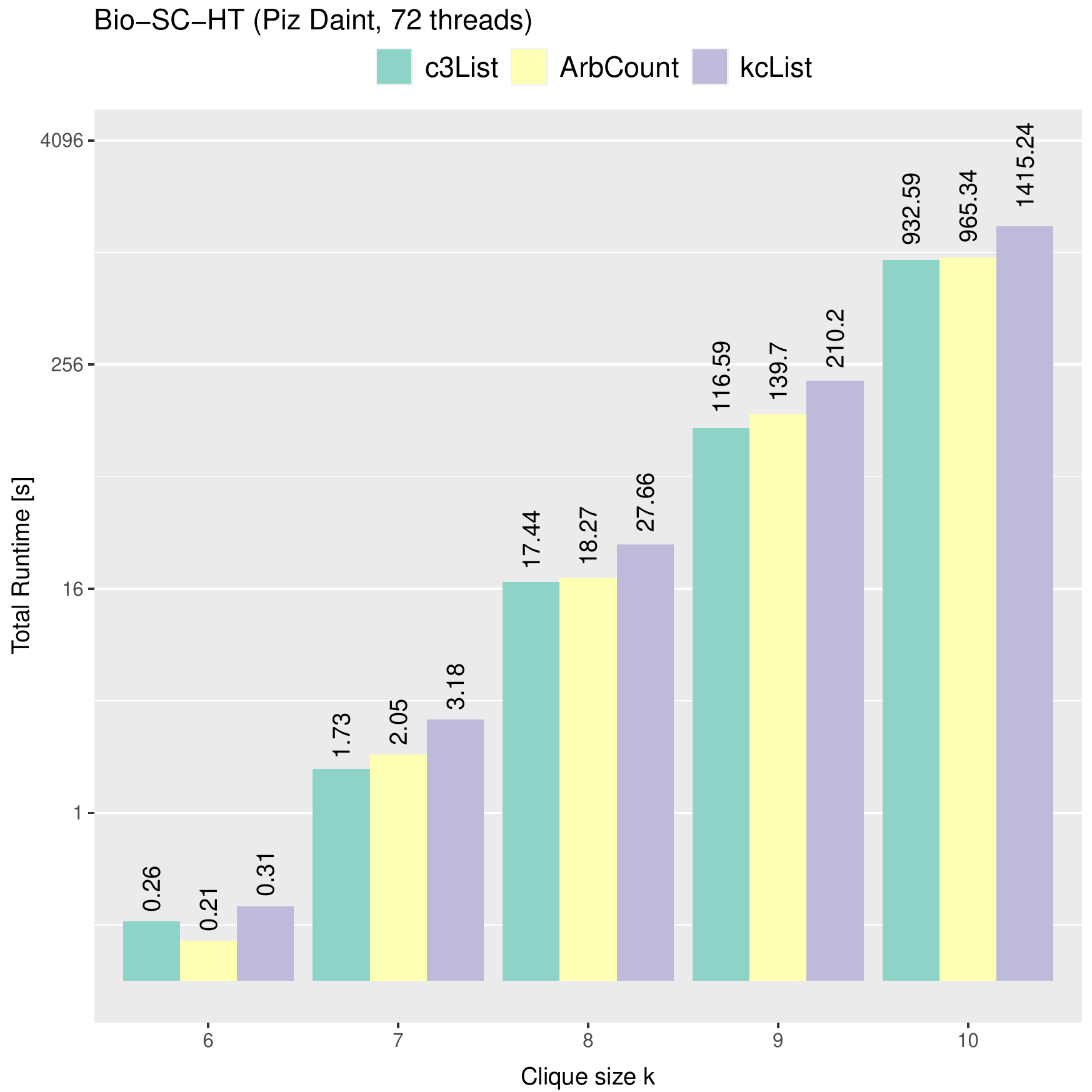}
   	\vspace{1em}
    \caption{}
    \vspace{2em}
  \end{subfigure}
  \caption{Runtime Results for 72 threads for varying clique sizes. Our algorithm is c3List, ArbCount is by Shi et al.~\cite{Shi2020}, and kcList is by Danisch et al.~\cite{Danisch2018}}\label{fig:results2}
\end{figure*}

\end{document}